

\documentclass{ws-procs9x6}
\usepackage{graphicx, subfigure}

\begin{document}

\title{An epidemiological model of viral infections in a \textit{Varroa}-infested bee colony:
the case of a bee-dependent mite population size}

\author{Sara Bernardi, Ezio Venturino}

\address{Dipartimento di Matematica ``Giuseppe Peano'', \\
via Carlo Alberto 10, \\
Universit\`a di Torino, Italy \\
E-mail: s.bernardi@unito.it, ezio.venturino@unito.it}

\maketitle

{\abstracts{In recent years the spread of the ectoparasitic 
mite \textit{Varroa destructor} has become the most serious threat to worldwide apiculture.
In the model presented here we extend the bee population dynamics with mite viral 
epidemiology examined in an earlier paper
by allowing a bee-dependent mite population size. 
The results of the analysis match field observations well and give a clear explanation of how
\textit{Varroa} affects the epidemiology of certain naturally occurring bee viruses,
causing considerable damages to colonies.
The model allows only four possible stable equilibria, using known field
parameters. 
The first one contains only the thriving healthy bees.
Here the disease is eradicated and also the mites are wiped out. Alternatively,
we find the equilibrium still with no mite population,
but with endemic disease 
among the thriving bee population.
Thirdly, infected bees coexist with the mites in the \textit{Varroa} invasion scenario; in
this situation the disease invades the hive, driving the
healthy bees to extinction and therefore affecting all the bees. 
Coexistence is also possible, with
both populations of bees and mites thriving and with the
disease endemically affecting both species.
The analysis is in line with field observations in natural honey bee colonies. 
Namely,
these diseases are endemic and if the mite population is present, necessarily
the viral infection occurs. Further, in agreement with the fact that the presence 
of \textit{Varroa} increments the viral transmission, the whole bee population may 
become infected when the disease vector is present in the beehive.
Also, a low horizontal transmission rate of the virus
among the honey bees will help in protecting the bee colonies from
\textit{Varroa} infestation and viral epidemics.}}

\section{Introduction}

Current evidence shows that Earth loses between one and ten percent of biodiversity per decade,
a major event of extinction of biological diversity. The causes are mainly
habitat loss, pest invasions, pollution, over-harvesting and diseases, \cite{Intro},
thereby endangering natural ecosystem services that are vital for humanity.

All products of agriculture depend on pollination, that is performed by wild,
free-living organisms such as bees, butterflies, moths and flies. To this end, there is also
the availability of commercially-managed bee species.
But bees represent the most important group of pollinators in most geographical regions,
also for economical reasons, \cite{Intro}. An estimation of the
United Nations Food and Agriculture Organisation (FAO) sets the bee-pollinated
crops at 71 out of the 100 species providing 90\% of food worldwide.
In Europe, there are 264 crop species and
84\% of them are animal-pollinated; further, 4000 vegetable varieties thrive due to
the pollination activities of bees, \cite{Intro}.

As the bee group is the most important pollinator worldwide, this paper focuses on the instability
of bee populations and, in particular, on the the most serious threat to apiculture globally: the
external parasitic mite {\it{Varroa destructor}}, discovered in Southeast Asia in 1904.
This mite is of the size of a pinhead; it feeds on the bees’ circulatory fluid.
It is an invasive species, spreading from one hive to another one.
Today it is present nearly in the whole world, \cite{Intro}.
The serious damage to the bee colonies does not derive from the
parasitic action of the mite but, above all, from its action as vector of many viral diseases.
It increases
the transmission rate of diseases such as acute paralysis (ABPV) and deformed wing viruses (DWV),
that are considered among the main causes of Colony Collapse Disorder (CCD).
This action has been reinforced since about
thirty years ago when the mite has shifted hosts
from {\it{Apis cerana}} to {\it{Apis mellifera}}.
With no control, the infestations are bound to cause the untimely
death of bee colonies within three years.

Mathematical modelling represents a powerful tool for investigating the triangular relationship
between honey bees, {\it{Varroa}} and viral disease. It allows the exploration of the beehive system
without the need of unfeasible or costly field studies. In particular, mathematical models of the
epidemiology of viral diseases on {\it{Varroa}}-infested colonies would allow us to explore the host
population responses to such stressful situations. 

In the next Section, the model is presented and some basic properties
are investigated. In Section \ref{analysis},
equilibria are analysed
for feasibility and stability. A sensitivity analysis on the model parameters is performed
in Section \ref{sens}. A final discussion concludes the study.

\section{The model}

Let $B$ denote healthy bees, $I$ the infected ones, $M$ the healthy mites and $N$ the infected ones
and let the population vector be $X=(B,I,M,N)^T$.

In the model proposed here we essentially extend the epidemiology of a beehive infested by
\textit{Varroa} mites examined earlier in \cite{H}, to account for
a Leslie Gower term in the mite populations, to model the
{\it{Varroa}} population size as a bee population-dependent function,
a step also undertaken in the recent paper \cite{ratti2}.
In fact, the mites essentially carry out their life on the host’s body.
These changes are therefore reflected
in the last two equations, for the mites evolution. The resulting model reads as follows: 
\begin{equation}\label{model}
X'=f(X)=(f_B,f_I,f_M,f_N)^T, \quad f:\mathcal{D}^0 \rightarrow \mathbb{R}_+^4,
\end{equation}
where
\begin{eqnarray*}
\frac {dB}{dt}&=&f_B=b \frac{B}{B+I}-\lambda BN-\gamma BI-m B   
\\ 
\frac {dI}{dt}&=&f_I=b \frac{I}{B+I} +\lambda BN+\gamma BI-(m+\mu)I 
\\
\frac {dM}{dt}&=&f_M=r(M+N) -nM -\frac{p}{h(B+I)}M(M+N)-M(\beta I-\delta N-eB) 
\\
\frac {dN}{dt}&=&f_N=-n N-\frac{p}{h(B+I)} N (N+M)+\beta MI+\delta MN-eNB.  
\end{eqnarray*}
To prevent in the right-hand side of \eqref{model} the vanishing of some terms in the
denominator, we define $\mathcal{D}^0$ as the domain of $f$, explicitly
\begin{equation}
\mathcal{D}^0=\left\{ X=(B,I,M,N) \in \mathbb{R}^4_+ : B+I \neq 0\right\}.
\end{equation}


The first two equations describe respectively the evolution of healthy and infected bee populations.
They are born healthy or infected in proportion to the fraction of healthy or infected bees in the
colony, with constant $b$. The infection process for the larvae indeed occurs mainly through
their meals of contaminated royal jelly.
Healthy bees can contract the virus by infected mites at rate $\lambda$, second terms in
the equations. The next terms model
horizontal transmission of the virus among adult bees,
occuring at rate $\gamma$ via small wounds of
the exoskeleton, e.g. as a result of hair loss, or ingestion of faeces. In the last terms we find
the bees' natural mortality $m$ and the disease-related mortality $\mu$, taken into account for
the infected bee population only. 

The last two equations contain the \textit{Varroa} dynamics, 
partitioned among susceptible and infected states, that vector
the viral disease and once infected remain so for their lifetime.
The mites are always born healthy at rate $r$, as no
viruses are passed vertically from the parents to offsprings.
In the extended model the first three terms describe {\it{Varroa}} growth, natural mortality and
intraspecific competition.
Note that instead in the previously
introduced system, \cite{H}, the mites grow
logistically. Here if the mites reproduction rate $r$ is smaller than
their natural mortality $n$, i.e.
\begin{equation}\label{assump_1}
r<n,
\end{equation}
it is easy to see that the total mite population becomes extinct.
The effect of intraspecific competition among mites is described by the Leslie Gower term at rate
$p$. It accounts for the mite population
dependence on the total number of adult bees, the ``resource for which they
compete'' in the colony, \cite{Ratti}. 
The parameter $h$ expresses the average number of mites per bee.
Note that because the virus does not cause any harm to the
infected \textit{Varroa} population, competition among healthy and
infected mites occurs at the same rate, in other words we find the terms
$M(M+N)$ and $N(M+N)$ with the same ``weight'' in both equations.
If infected were weakened by the disease, we would rather have in the bracket
$M+cN$, with some $c<1$.
Healthy mites can become infected by infected bees at rate $\beta$, the fourth terms in the last
two equations, but they
can also acquire the virus at rate $\delta$
also horizontally from other infected mites.
The last term in the last two equations models the grooming behavior of healthy bees at rate $e$.
We assume that the infected bees do not groom because they are weaker due to the disease
effects. Note that for healthy and infected mites the damage due to grooming is the same, $e$,
as this activity depends only on the bees.


\subsection{Well-posedness and boundedness}

We now address the issue of well-posedness, following basically the path of \cite{FCB}.
The solutions trajectories are shown to be always at a finite distance
from the set 
$\Theta=\left\lbrace (B,I,M,N)  \in \mathbb{R}^4_+ : B+I=0 \right\rbrace $,
where the total bee population vanishes.

\begin{theorem}
Well-posedness and boundedness

\label{teobound}
Let $X_0 \in \mathcal{D}^0$. A solution of \eqref{model}
defined on $[0,+\infty )$ with $X(0)=X_0$ exists uniquely.
Also, for an arbitrary $t>0$, it follows $X(t) \in \mathcal{D}^0$,
and, indicating by $L$ a positive constant,
\begin{eqnarray}\label{bound1}
\frac{b}{\widetilde{m}} \leq \liminf\limits_{t\rightarrow +\infty} (B(t)+I(t))
\leq \limsup\limits_{t\rightarrow +\infty} (B(t)+I(t)) \leq \frac{b}{m}, \quad
\widetilde{m} = m+\mu;\quad \\ \label{bound2}
M(t)+N(t) \leq L , \quad \forall t \geq 0.\qquad 
\end{eqnarray}
\end{theorem}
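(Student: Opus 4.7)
First, I would establish local existence and uniqueness via Picard-Lindelöf: the vector field $f$ is $C^\infty$ on $\mathcal{D}^0$ because its only singular locus $\{B+I=0\}$ is excluded, so every $X_0\in\mathcal{D}^0$ admits a unique local solution. Forward invariance of $\mathbb{R}^4_+$ follows by inspecting $f$ on each coordinate face:
\begin{equation*}
f_B|_{B=0}=0,\qquad f_I|_{I=0}=\lambda BN\ge 0,\qquad f_M|_{M=0}=rN\ge 0,\qquad f_N|_{N=0}=\beta MI\ge 0,
\end{equation*}
so no trajectory starting in $\mathbb{R}^4_+$ can leave it.

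Next, adding the first two equations yields
\begin{equation*}
(B+I)'=b-m(B+I)-\mu I,
\end{equation*}
hence $b-\widetilde m(B+I)\le(B+I)'\le b-m(B+I)$. Comparison with the scalar linear ODEs $y'=b-my$ and $y'=b-\widetilde m y$ produces \eqref{bound1}. In addition, whenever $B+I$ is small one has $(B+I)'\ge b>0$, so $B+I$ stays bounded below by $\min\{(B+I)(0),b/\widetilde m\}>0$ throughout the evolution; this keeps the trajectory inside $\mathcal{D}^0$ and, crucially, prevents the denominator $h(B+I)$ in the mite equations from degenerating.

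For \eqref{bound2}, I sum the last two equations to obtain
\begin{equation*}
(M+N)'=(r-n)(M+N)-\frac{p(M+N)^2}{h(B+I)}+2\delta MN+eB(M-N).
\end{equation*}
Using $2MN\le\frac{1}{2}(M+N)^2$ (AM-GM), $|M-N|\le M+N$, the upper bound $B\le B+I\le\max\{(B+I)(0),b/m\}$ and the uniform lower bound on $B+I$ from the previous step, the right-hand side is majorized by a scalar logistic-type expression $\alpha(M+N)-\kappa(M+N)^2$ with $\kappa>0$ in the physically relevant parameter regime. Comparison with the associated logistic ODE then yields a finite $L$ such that $M+N\le L$ for every $t\ge 0$. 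The resulting a priori bounds confine the solution to a compact subset of $\mathcal{D}^0$ on every finite time interval, so the standard continuation principle extends the local solution to $[0,+\infty)$.

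The step I expect to be the main obstacle is the bound on $M+N$: the positive cross-terms $2\delta MN$ and $eB(M-N)$ compete with the Leslie-Gower quadratic $-p(M+N)^2/(h(B+I))$, and isolating a genuinely negative quadratic coefficient requires invoking simultaneously the lower bound on $B+I$ (so that $p/(h(B+I))$ does not degenerate) and the AM-GM estimate on $MN$.
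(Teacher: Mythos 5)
Your overall strategy coincides with the paper's: Lipschitz/Picard--Lindel\"of for local existence on $\mathcal{D}^0$, summing the bee equations and comparing with $y'=b-my$ and $y'=b-\widetilde m y$ to get \eqref{bound1} together with the persistence of $B+I$ away from zero, and then a logistic-type comparison for $V=M+N$. The explicit check of the coordinate faces for invariance of $\mathbb{R}^4_+$ is a sensible addition that the paper leaves implicit, and the continuation argument at the end is the same as the paper's.

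The genuine gap is in your bound for $M+N$, and it is exactly where you predicted trouble. In the paper's computation \eqref{magg} the sum of the last two equations is
\begin{equation*}
V'=(r-n)V-\frac{p}{h(B+I)}\,V^2-eBV ,
\end{equation*}
with \emph{no} surviving $\beta$- or $\delta$-terms: the terms $\beta MI$ and $\delta MN$ only transfer mites from the $M$ class to the $N$ class (they appear with opposite signs in $f_M$ and $f_N$), and the grooming losses combine into $-eBV\le 0$. Hence the quadratic coefficient is $-p/(h(B+I))\le -pm/(hb)<0$ unconditionally, and \eqref{bound2} follows for all parameter values by dropping $-eBV$ and comparing with $V'\le\phi(\overline V)-\alpha V$. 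Your version retains $+2\delta MN+eB(M-N)$, which comes from expanding the printed term $-M(\beta I-\delta N-eB)$ literally; with the signs the paper actually uses in its proof (and which match the stated biology: grooming removes mites, $\delta MN$ is a conversion of susceptible mites into infected ones), those cross-terms are absent. As a consequence your argument closes only when $\frac{pm}{hb}>\frac{\delta}{2}$ (so that $\kappa>0$), i.e.\ you prove \eqref{bound2} under a parameter restriction that the theorem does not assume. To repair the proof you should either use the cancellation above, or at minimum state explicitly that without it the claimed bound is only conditional; the appeal to ``the physically relevant parameter regime'' is not enough, since the statement is asserted for an arbitrary positive constant $L$ and arbitrary admissible parameters. (The term $eB(M-N)\le eB(M+N)$ is harmless, as it only perturbs the linear coefficient; the obstruction is entirely the $2\delta MN$ term against the Leslie--Gower quadratic.)
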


\begin{proof}
We follow the arguments of \cite{FCB,H} for the first part of the proof.
Global Lipschitz continuity of the right-hand side of the system holds 
in $\mathcal{D}^0$, thereby implying existence and uniqueness of the solution of
system \eqref{model} for every trajectory at a finite
distance of this boundary. 
Now \eqref{bound1} and \eqref{bound2} hold for all the
trajectories starting at any point with $B+I \neq 0$.
The boundedness
of the variables entail consequently that all trajectories exist at all times in the future
and are bounded away from the set $\Theta$.

Adding the two equations for the bees in \eqref{model}, at any point in $\mathcal{D}^0$
we have
\begin{equation*}
B'+I'=b-mB-(m+\mu)I \geq b-\widetilde{m}(B+I), \quad \widetilde{m}=m+\mu.
\end{equation*}
Integration of this differential inequality between $X(0)=X_0$ and $X(t)$,
points belonging to a trajectory such that
$X(\tau) \in \mathcal{D}^0$ for all $\tau \in [0,t]$, we obtain the following lower bound, with
positive right hand side at any $t>0$,
\begin{equation}\label{bound1a}
B(t)+I(t) \geq \frac{b}{\widetilde{m}}(1-e^{-\widetilde{m}t})+(B(0)+I(0))e^{-\widetilde{m}t}.
\end{equation} 

Similarly, but bounding from above, we find $B'+I' \leq b-m(B+I)$,
and therefore
\begin{equation}\label{bound1b}
B(t)+I(t) \leq \frac{b}{m}(1-e^{-mt})+(B(0)+I(0))e^{-mt}.
\end{equation}
The inequalities
in \eqref{bound1} for any portion of the trajectory belonging to $\mathcal{D}^0$ follow
from \eqref{bound1a} and \eqref{bound1b}.

We now turn to the \textit{Varroa} populations $M$ and $N$, and their
the whole mite population $V=M+N$, summing their corresponding
equations of \eqref{model} and using \eqref{bound1}, for any point within $\mathcal{D}^0$
we have the upper bound
%
\begin{eqnarray}\label{magg}
V'+\alpha V =(r-n+\alpha)V-\frac{p}{h(B+I)}V^2  -eBV \qquad \\ \nonumber
\leq \left[ r-n+\alpha-\frac{p}{h(B+I)}V\right] V  \quad
\leq \left[ r-n+\alpha-\frac{pm}{hb}V \right] V = \phi(V) \leq \phi(\overline{V}),
\end{eqnarray}
where $\phi(\overline{V})$ represents the maximum value of the parabola $\phi(V)$,
namely $\overline{V}=hb(r-n+\alpha)(2pm)^{-1}$.
%
%

A consequence of \eqref{magg} is the differential inequality
$V' \leq \phi(\overline{V}) - \alpha V$
and thus integrating it we obtain the following inequality, for all $t \geq 0$,
proving \eqref{bound2},
\begin{equation}\label{mite_bound}
V(t) \leq \frac{\phi(\overline{V})}{\alpha} \left(1-e^{-\alpha t} \right) + V(0)e^{-\alpha t}
\leq \max \left\lbrace  V(0), \frac{\phi(\overline{V})}{\alpha} \right\rbrace.
\end{equation}
From (\ref{bound1a}), (\ref{bound1b}) and (\ref{mite_bound}) the boundedness of all populations is thus ensured,
\cite{FCB}.
Thus all trajectories originating in $\mathcal{D}^0$
remain in $\mathcal{D}^0$ for all $t>0$. \quad
\end{proof}

Let $\mathcal{D}^1$ be the largest subset of $\mathcal{D}^0$ satisfying
the inequalities of Theorem \eqref{teobound},
\begin{equation*}
\mathcal{D}^1 \doteq \left\lbrace (B,I,M,N) \in \mathbb{R}^4_+ : \frac{b}{\widetilde{m}}
\leq B+I \leq \frac{b}{m}, 0 \leq M+N \leq L \right\rbrace.
\end{equation*}
As a consequence of Theorem \ref{teobound},
it is a compact set, positively invariant for all the system's trajectories.
All the system's equilibria belong to $\mathcal{D}^1$, as it is shown later.
Thus the equilibria analysis is sufficient to explain the whole system's behavior.

\section{Equilibria}\label{analysis}

The equilibria $E_k=(B_k, I_k, M_k, N_k)$ of \eqref{model} are
$$
E_1=\left(\frac{b}{m},0,0,0\right),\quad E_2=\left( 0,\frac{b}{m+\mu},0,0\right),
$$
which are always feasible;
the points
$$
E_3=\left(\frac{b}{m},0,\left( r-\frac{eb}{m}\right)\frac{bh}{rm},0\right), \quad
E_4=\left(\frac{\mu^2-b \gamma +m \mu}{\mu \gamma}, \frac{b \gamma - m \mu}{\mu \gamma},0,0\right),
$$
the former feasible for
\begin{equation}\label{E3feas}
r \geq \frac{eb}{m},
\end{equation}
the latter feasible whenever the following condition is satisfied
\begin{equation}\label{E4feas}
0<b \gamma - m \mu <\mu^2.
\end{equation}

Then, there is the coexistence equilibrium $E_*=(B_*, I_*, M_*, N_*)$. However, it is not
analytically tractable and will therefore be investigated numerically, with the help of simulations.

Finally, we find the equilibrium $E_5=(0,I_5, M_5, N_5)$ with no healthy bees. It is obtained as
intersection of two conic sections that lie in the fist quadrant of the $M$ - $N$ phase plane. We
discuss it in the next subsection.

\subsection{The healthy-bees-free equilibrium}

From the second equation we find $I=b(m+\mu)^{-1}$.
Substituting the value of $I$, the last two equations can be rewritten as 
 
\begin{equation}
\label{psi}
\psi: (r-n)M+rN-\frac{p(m+\mu)M(M+N)}{hb}-\frac{\beta b M}{m+\mu}- \delta MN=0,
\end{equation}
and
\begin{equation}
\label{eta}
\eta: -nN-\frac{p (m+\mu) N(N+M)}{hb}+\frac{\beta b M}{m+\mu}+\delta MN=0.
\end{equation}

Thus, the equilibrium follows from determining the intersection of
$\psi$ and $\eta$ in the fist quadrant of the $M - N$ phase plane.

We begin by analyzing the curve $\psi$.

Solving \eqref{psi} for the variable $N$, we find 
\begin{equation}
\label{psi_expl} 
\psi: N=M \left[ \frac{p(m+\mu)^2 M +\beta h b^2 + (n-r)hb(m+\mu)}{(m+\mu)(hbr-p(m+\mu)M-\delta h b M)} \right]=M \tilde{\psi}(M)
\end{equation}
The conic $\psi$ crosses the vertical axis at the origin and the horizontal one at the absissa 
\begin{equation}
M_0=\frac{hb[(r-n)(m+\mu)-\beta b]}{p(m+\mu)^2}
\end{equation}
Further, there is the vertical asymptote
\begin{equation}
M_{\infty}=\frac{hbr}{p(m+\mu)+\delta h b}>0.
\end{equation}
From \eqref{fig:psi}, it is easy to assess the signs of the numerator and the denominator,
respectively
$\mathcal{N}>0$ for $M>M_0$
and $\mathcal{D}>0$ for $M<M_{\infty}$.
Depending on the sign of $\mathcal{N}$ and $\mathcal{D}$, $\psi_1$ thus
shows three different shapes,
Figure \ref{fig:psi}.

\begin{figure}[ht!]
\centering \textbf{Hyperbola $\psi$} \par \medskip
{\includegraphics[scale=.18]{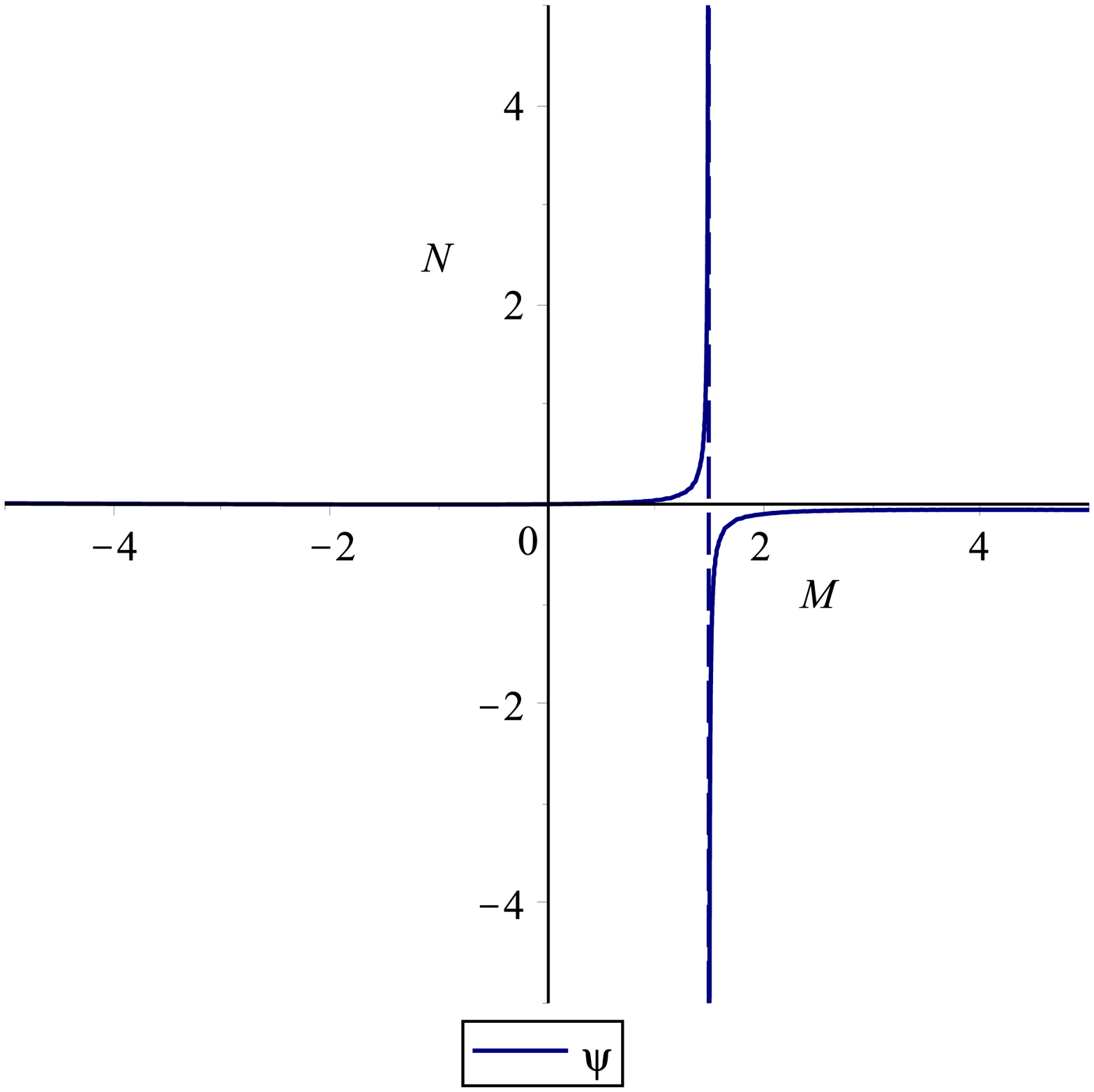}}\quad
{\includegraphics[scale=.18]{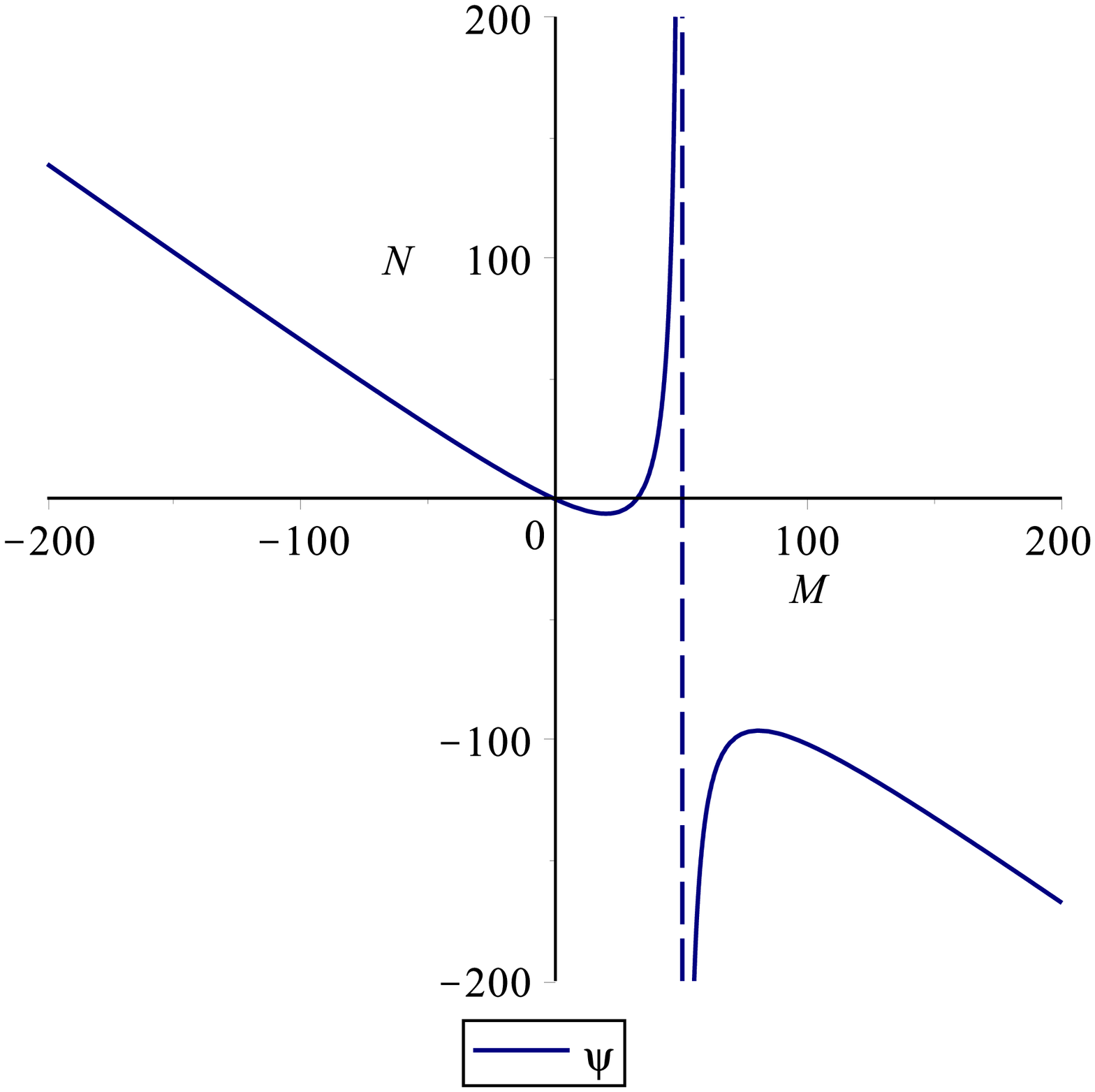}}\quad
{\includegraphics[scale=.18]{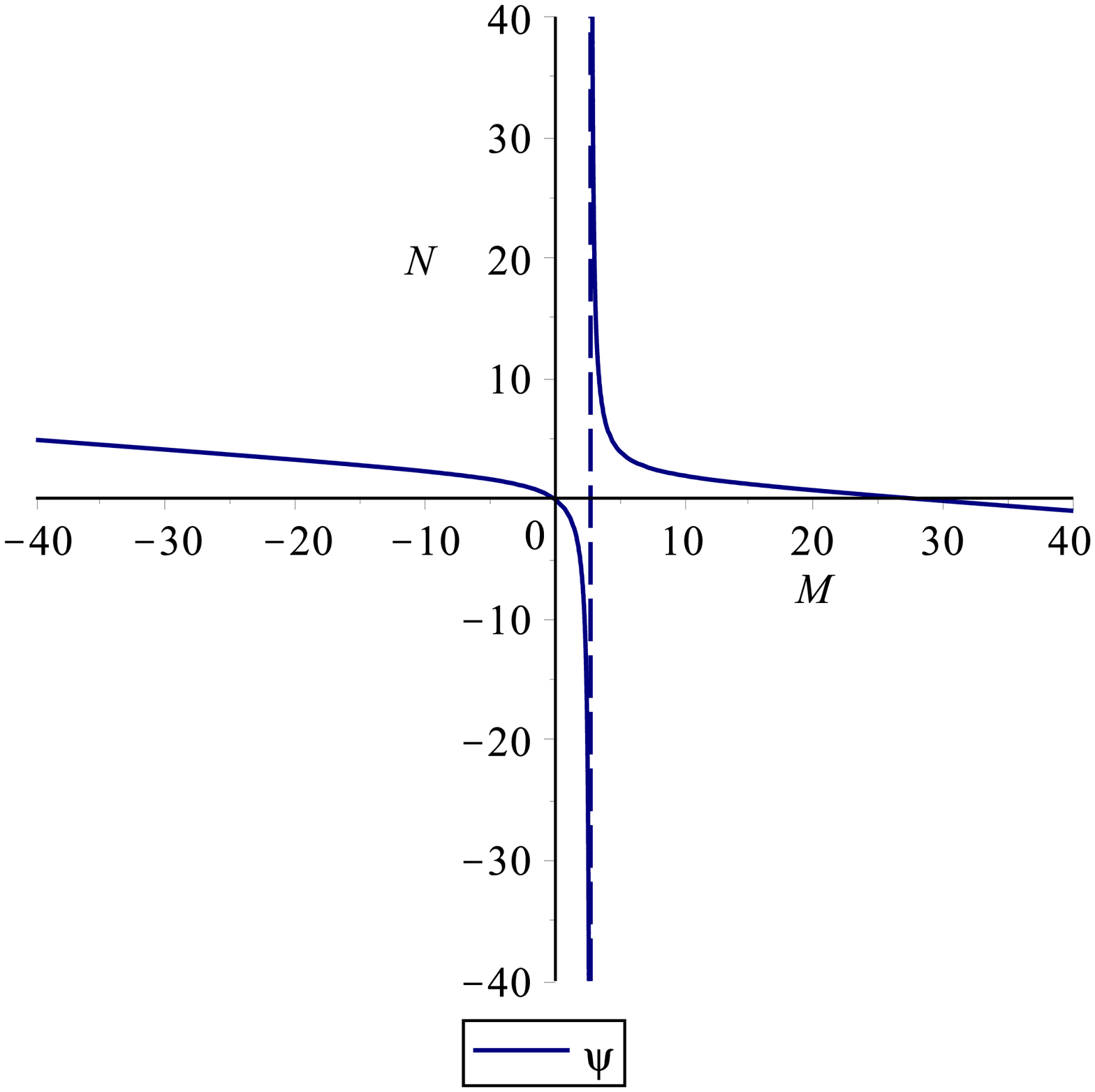}}
\caption{
Left: Case 1. $M_0<0<M_{\infty}$, for the parameters values $r=0.06$, $e=0.001$, $\lambda=0.03$,
$b=2150$, $\beta=0.0002$, $m=0.023$, $\mu=8$, $\delta=0.04$, $p=0.05$, $h=0.9$, $n=0.007$.
Center: Case 2. $0<M_0<M_{\infty}$, for the parameters values $r= 0.2$, $e= 0.001$, $\lambda=0.03$,
$b= 150$, $\beta= 0.0002$, $m=0.023$, $\mu=8$, $\delta=0.001$, $p=0.05$, $h= 0.9$, $n= 0.1$.
Right: Case 3. $0<M_{\infty}<M_0$, for the parameters values $r=0.06$, $e=0.001$, $\lambda=0.03$,
$b= 250$, $\beta= 0.0002$, $ m=0.04$, $\mu= 8$, $\delta= 0.02$.
}
\label{fig:psi}
\end{figure}

We now turn to the second conic section, $\eta$.

Rearranging \eqref{eta} for $M$, we get the explicit form
$$
M=N \frac{[hbn+p(m+\mu)N](m+\mu)}{\beta b^2 h +[\delta h b(m+\mu)-p(m+\mu)^2]N }=N \tilde{\eta}(N)
$$
Proceeding as before, we determine the intersections of $\eta$ with the axes. 
The conic $\eta$ goes through the origin and crosses the $N$ axis at
$$
N_0=-\frac{hbn}{p(m+\mu)}<0.
$$
Again, we get one vertical asymptote
$$
N_{\infty}=\frac{\beta b^2 h}{(m+\mu)[p(m+\mu)-\delta h b]}.
$$
Now from \eqref{eta}, the sign of the numerator is positive, $\mathcal{N}>0$, for $N>N_0$,
while for the denominator $\mathcal{D}$ two cases arise. Namely, if $\delta h b< p(m+\mu)$,
i.e.\ $N_{\infty}>0$, we get $\mathcal{D}>0$ for $N<N_{\infty}$. Otherwise,
if $\delta h b> p(m+\mu)$, i.e.\ $N_{\infty}<0$, to have $\mathcal{D}>0$ the opposite
condition $N>N_{\infty}$ must hold.

Figure \ref{fig:eta} sums up the three possible shapes for $\eta$.

\begin{figure}[h!]
\centering \textbf{Hyperbola $\eta$} \par \medskip
\includegraphics[scale=.18]{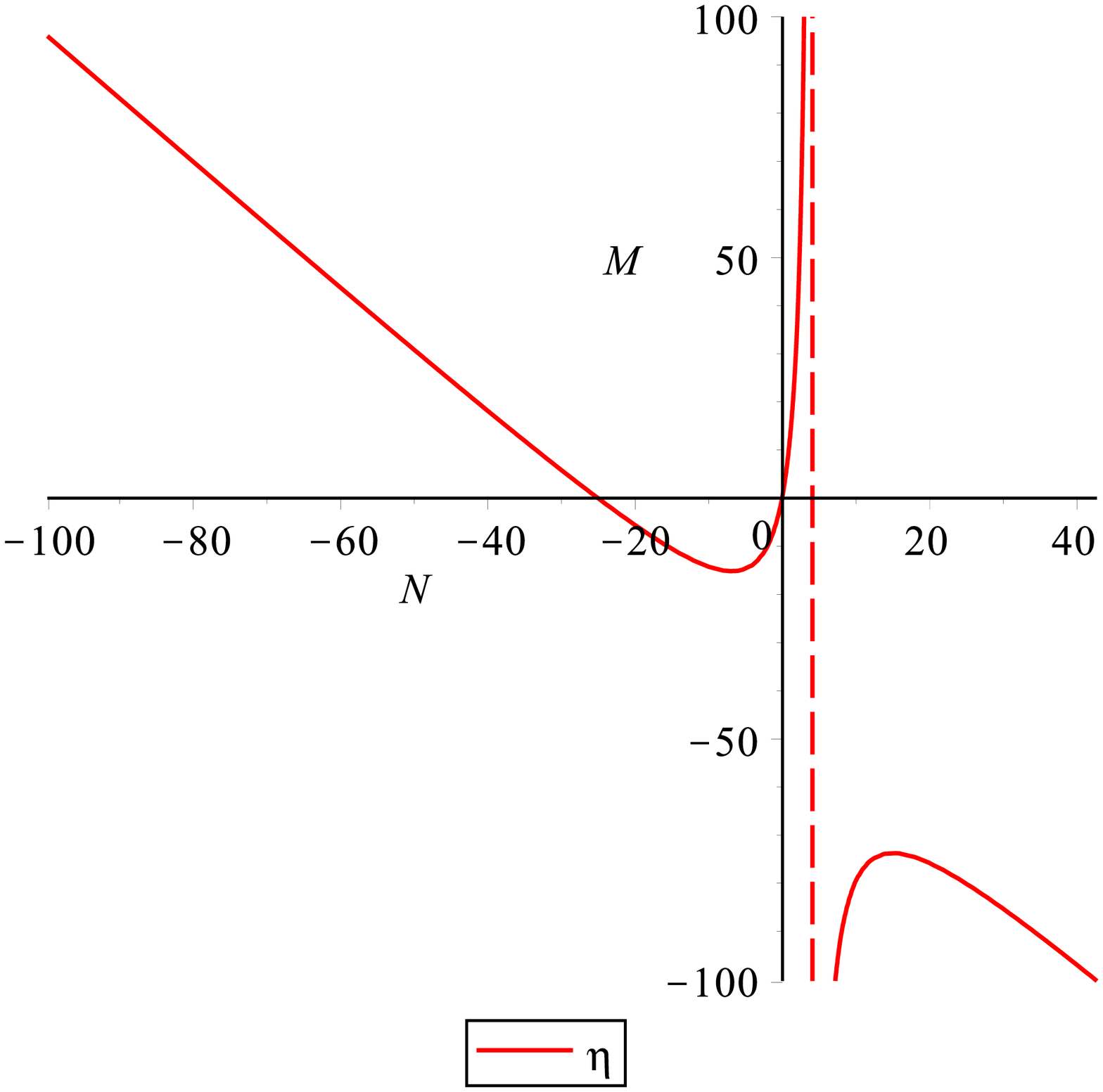}\quad
\includegraphics[scale=.18]{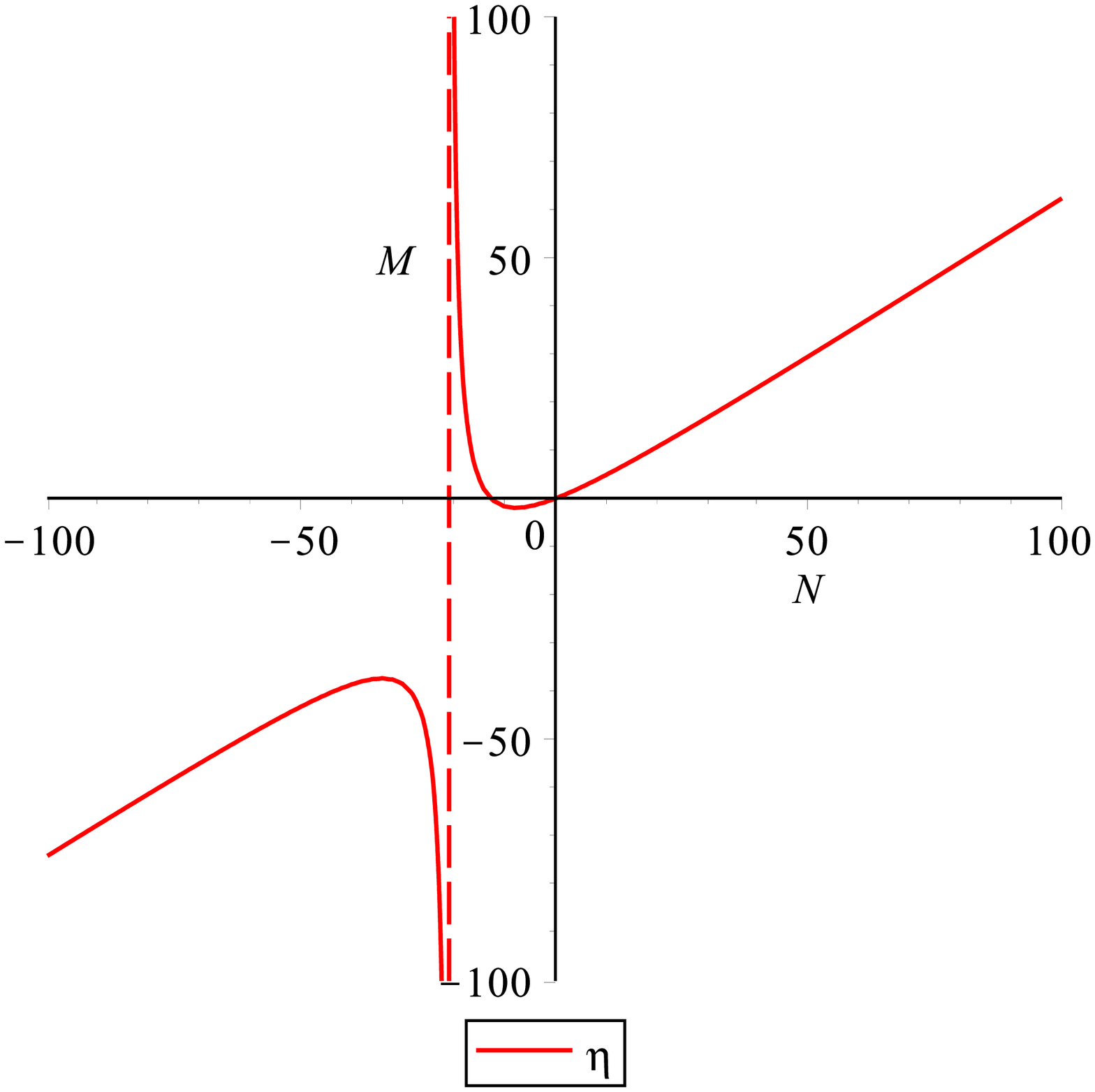}\quad
\includegraphics[scale=.18]{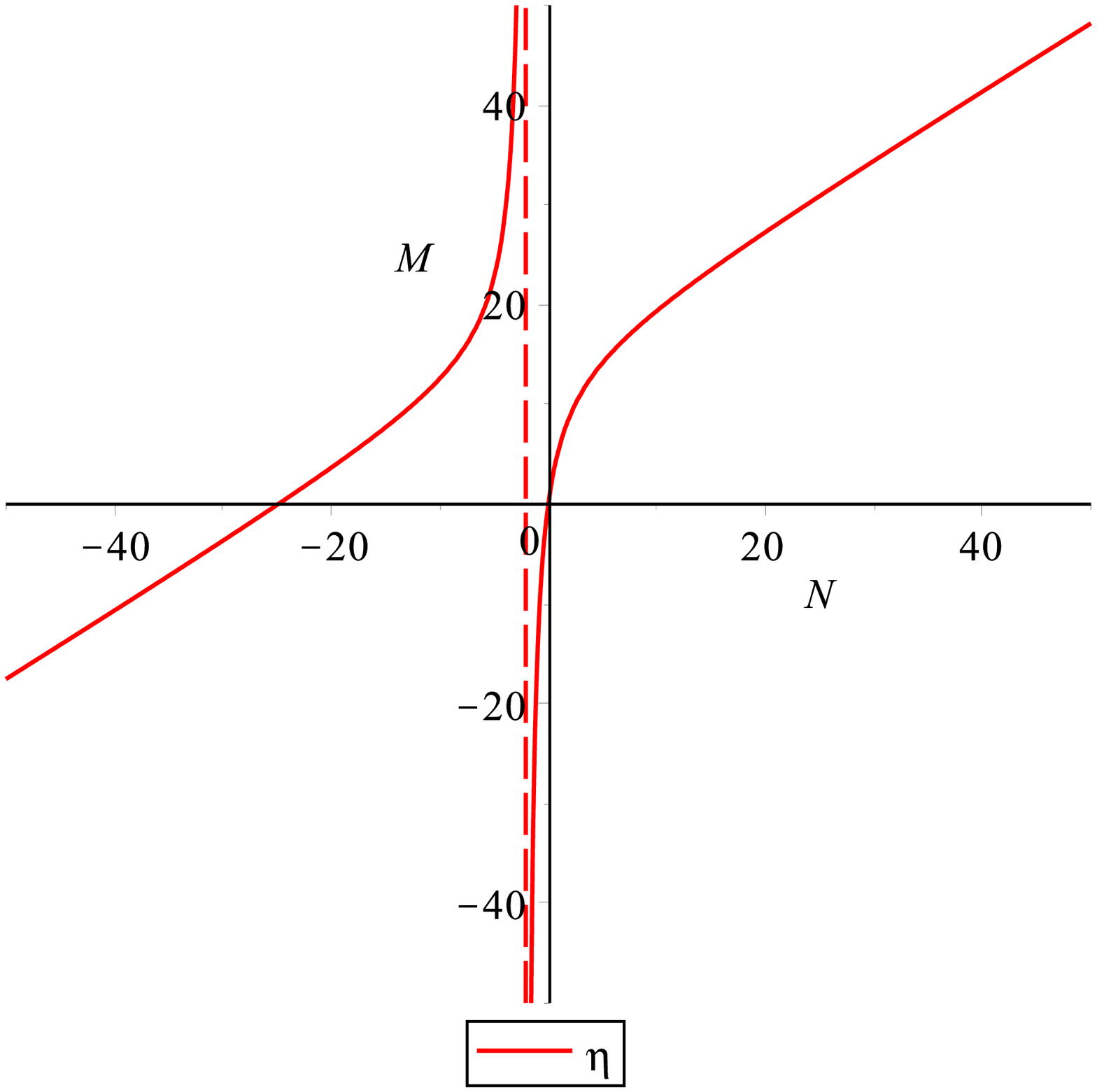}
\caption{
Left: Case A. $N_0<0<N_{\infty}$, for the parameters values $r= 0.2$, $e=0.001$, $\lambda=0.03$,
$b= 100$, $\beta= 0.001$, $ m= 0.023$, $\mu = 8$, $\delta= 0.001$, $p= 0.05$, $h = 1$, $n= 0.1$.
Center: Case B. $N_{\infty}<N_0<0$, for the parameters values $r=0.2$, $e= 0.001$, $\lambda=0.03$,
$b= 100$, $\beta = 0.01$, $m= 0.023$, $\mu= 8$, $\delta= 0.01$, $p= 0.05$, $h= 1$, $n= 0.05$.
Right: Case C. $N_0<N_{\infty}<0$, for the parameters values $r= 0.2$, $e= 0.001$, $\lambda = 0.03$,
$b = 100$, $\beta= 0.001$, $m=0.023$, $\mu = 8$, $\delta = 0.01$, $p = 0.05$, $h = 1$, $n = 0.1$. 
}
\label{fig:eta}
\end{figure}

Finally, by graphically plotting both the hyperbolae in the $M$-$N$ plane,
they meet at the origin and they further intersect at another point located in the first quadrant.
This immediately provides the unconditional existence and feasibility
of the equilibrium point $E_5$.

\subsection{Stability}

The Jacobian $J$ of (\ref{model}) is the following matrix
$$
\left(
\begin{array}{cccc}
J_{11} & -\dfrac{bB}{(B+I)^2} - \gamma B & 0 & - \lambda B \\
J_{21} & \dfrac b{B+I} - \dfrac {bI}{(B+I)^2} + \gamma B -(m+\mu) & 0 & \lambda B \\
J_{31} &  \dfrac {pM(M+N)}{h(B+I)^2}  -\beta M & J_{33} & r-\dfrac{pM}{h(B+I)} -\delta M \\
J_{41} &  \dfrac {pN(M+N)}{h(B+I)^2}  +\beta M & J_{43} &
-n- \frac {p(M+2N)}{h(B+I)}  +\delta M - eB \\
\end{array}
\right)
$$
with
\begin{eqnarray*}
J_{11}=\frac b{B+I} - \frac {bB}{(B+I)^2} - \lambda N - \gamma I -m,\quad
J_{21}=- \frac {bI}{(B+I)^2} + \lambda N + \gamma I,\\
J_{31}= \frac {pM(M+N)}{h(B+I)^2}  -eM,\quad
J_{33}=r-n-\frac {p(2M+N)}{h(B+I)} -\beta I - \delta N -e B,\\
J_{41}= \frac {pN(M+N)}{h(B+I)^2}  -eN,\quad
J_{43}=- \frac {pN}{h(B+I)}  +\beta I + \delta N.
\end{eqnarray*}

At $E_1$ two eigenvalues of the the Jacobian evaluated at this point, $J(E_1)$, are negative,
$-m$ and $-n-ebm^{-1}$. The remaining two provide the stability conditions
\begin{equation}\label{E1_stab}
\gamma b< \mu m, \quad
mr< eb+ nm.
\end{equation}

At equilibrium $E_2$ the Jacobian $J(E_2)$ has two explicit eigenvalues, $-(m+\mu)<0$
and $\mu -\gamma b (m+\mu)^{-1}$
while the Routh-Hurwitz conditions on the remaining minor
show that if (\ref{assump_1}) is not satisfied, the equilibrium is unstable,
because in such case $\det (J(E_2))<0$,
$$
-{\textrm{tr}} (J(E_2))=2n+\frac {b\beta}{m+\mu}-r, \quad
\det (J(E_2))=(n-r)\left( n+\frac {b\beta}{m+\mu}\right).
$$

The Jacobian $J(E_3)$ once again gives two explicit eigenvalues,
$-m<0$ and another one providing the first stability condition
$ r< n + 2pM_3(hB_3)^{-1} + eB_3$, i.e. explicitly
\begin{equation}\label{E3_stab1}
2eb < m (r-n),
\end{equation}
and the Routh-Hurwitz criterion on the remaining minor $\widetilde J_3$ gives the further
conditions
\begin{eqnarray}\label{E3_stab2}
-{\textrm{tr}} (\widetilde J_3)=-\frac b{B_3} - \gamma B_3 +m+\mu+n+\frac {pM_3}{hB_3}
- \delta M_3+eB_3>0, \qquad \\
\label{E3_stab3}
\det (\widetilde J_3)=\left( \frac b{B_3} +\gamma B_3 -m-\mu \right)
\left( \delta M_3 -eB_3 -n -\frac {pM_3}{hB_3}\right) - \beta \lambda B_3 M_3 >0. \qquad
\end{eqnarray}
Note that if (\ref{assump_1}) holds, $E_3$ is unstable, because in this case (\ref{E3_stab1})
cannot be satisfied.

At the point $E_4$ the characteristic equation factorizes into the product of two
quadratic equations, for which the Routh-Hurwitz conditions provide the following pairs of
inequalities, to be satisfied for stability
\begin{eqnarray}\label{E4_stab1}
2m+\mu + \gamma I_4 > \gamma B_4 + \frac {b}{B_4+I_4}, \\ \nonumber
B_4 I_4 \left[ \gamma^2 - \frac {b^2}{(B_4+I_4)^4} \right]
+ \left( \frac b{B_4+I_4} -m \right)^2 
\qquad \\ 
\nonumber
+ \left( \frac b{B_4+I_4} - m \right) \left[ \gamma B_4 - 
\frac {bI_4}{(B_4+I_4)^2} - \mu - \frac {bB_4} {(B_4+I_4)^2}  \right] \qquad \\ \label{E4_stab3}
> \left( \gamma I_4 + \frac {bB_4}{(B_4+I_4)^2} \right)
\left[ \gamma B_4 - \mu - \frac {bI_4}{(B_4+I_4)^2}\right], \qquad
\end{eqnarray}
and
\begin{eqnarray}\label{E4_stab2}
2n+2eB_4 + \beta I_4 > r, \quad
(n + \beta I_4 + e B_4) (n+eB_4)> r[ (n+eB_4) + \beta  I_4]. \qquad
\end{eqnarray}

Two eigenvalues are also explicitly found at $E_5$, $-m-\mu<0$ and the other one giving
the first stability condition
\begin{eqnarray}\label{E5_stab1}
\frac b{I_5} < \gamma I_5 + m + \lambda N_5.
\end{eqnarray}
The Routh-Hurwitz conditions for stability on the remaining quadratic become then
\begin{eqnarray}\label{E5_stab2}
2n+ \frac {3p(M_5+N_5)}{hI_5} + \beta I_5 + \delta N_5 > r + \delta M_5,\\ \nonumber
\left( r-n - \frac {p(2M_5 +N_5)}{hI_5} -\beta I_5 - \delta N_5 \right)
\left( \delta M_5 -n - \frac {p(M_5 +2N_5)}{hI_5} \right)\\ \nonumber
+ \left( \beta I_5 - \frac {p N_5}{hI_5} + \delta N_5 \right)
\left( \frac {p M_5}{hI_5} + \delta M_5 - r \right) > 0.
\end{eqnarray}

\section{Results}

After describing the set of parameter values used, the chosen initial conditions
and the field data available, we perform the sensitivity
analysis. The simulations have been performed using the Matlab built-in ordinary differential
equations solver ode45.

\subsection{Model parameters from the literature}

The model parameters that are known from the literature,
\cite{izslt}, \cite{joyce}, \cite{maci} and \cite{api2}, \S 8.2.3.5, 
are fixed in the simulations, while the remaining ones are changed
over a suitable range.
In the sensitivity analysis however we will vary also the known ones,
to simulate possible environmental variations, due perhaps to climatic changes
or other external disruptions.

We set the time unit to be the day.
The worker honey bees birth rate is $b=1500$,
their natural mortality rate instead is $m=0.023$,
which implies a life expectancy of $43.5$ days in the adult stage, \cite{izslt}.
There are no precise values for the grooming behavior in the literature.
A possible range of $e$ is presumed to be in the interval $[10^{-6},10^{-5}]$, \cite{joyce}.

The \textit{Varroa} population reproduces
exponentially fast, doubling every month during the spring and summer.
We then take $r \approx 30^{-1} \ln 2 $, i.e.\ $r=0.02$, \cite{api2}, p. 225.
In the same season,
the mite natural mortality rate in the phoretic phase, i.e.\ when attached to an adult
bee, is presumed
to have the value $n=0.007$, \cite{maci}

In Table 1 we list all the reference values for the numerical experiments.
The other model parameters are freely chosen, with hypothetical values.

\begin{table}[h!]\label{table3}
{{\textbf{Table 1}}. For these model parameters the values are obtained from the literature.
Bee and mite populations are measured in pure numbers.}
\begin{center}
\footnotesize
\begin{tabular}{|c|c|c|c|c|}
\hline
\textbf{Parameter}     &\textbf{Interpretation}        &\textbf{Value}  &\textbf{Unit}  &\textbf{Source}  \\ \hline
\textbf{$b$}            &Bee daily birth rate          &$1500$       &day$^{-1}$  &\cite{izslt} \\   \hline 
\textbf{$e$}            &Healthy bee grooming rate    &$10^{-6} - 10^{-5}$ &day$^{-1}$   &\cite{joyce}      \\   \hline 
\textbf{$m$}        &Bee natural mortality rate    &$0.023$ &day$^{-1}$ &\cite{izslt}    \\   \hline
\textbf{$r$}         &\textit{Varroa} growth rate    &$0.02$ &day$^{-1}$ &\cite{api2}    \\   \hline
\textbf{$n$}       &\textit{Varroa} natural mortality &$0.007$ &day$^{-1}$ &\cite{maci} \\ 
&rate in the phoretic phase & & & \\ \hline
\end{tabular}
\end{center}
\end{table}

\subsection{Setting of initial conditions and free parameters}

The colony conditions at the beginning of the spring
come from field data:
all the bees are healthy. Indeed the infected ones do not survive the winter, because
they have a lower life expectancy. In addition,
the colony treatments with acaricide are usually performed
in the late autumn, to allow the mite eradication.
We can safely assume then the mite population is around $10$ units at the start of the spring.


\subsection{Use of field data}

Before proceeding with the sensitivity analysis, we check that the feasibility and stability
conditions of the equilibrium points are satisfied by
the known parameter values, see Table 2.

In particular, we remark that the disease-free equilibrium point $E_3$
never occurs.
Namely, the feasibility condition \eqref{E3feas} does not hold in field conditions.

This result highlights the close connection between \textit{Varroa} and viruses.
Further, it matches well with beekeeper observations: the bees viral infection is bound to occur
whenever the mite population is present. The same conclusion has been obtained
in our previous study, \cite{H}.

\begin{table}[h!]
\label{table2}
{{\textbf{Table 2}}. Summary of the equilibria: feasibility and stability conditions}
\begin{center}
\begin{tabular}{|c|c|c|c|}
\hline
\textbf{Equilibrium}     &\textbf{Feasibility}        &\textbf{Stability}  &
\textbf{In field} \\ &&& \textbf{conditions} \\ \hline
\textbf{$E_1$}            &always          &\eqref{E1_stab} & allowed    \\   \hline 
\textbf{$E_2$}            &always          &unstable & unstable     \\   \hline 
\textbf{$E_3$}         &\eqref{E3feas}      &\eqref{E3_stab1}, \eqref{E3_stab2}, \eqref{E3_stab3} & infeasible   \\   \hline
\textbf{$E_4$}         &\eqref{E4feas}    &\eqref{E4_stab1}, \eqref{E4_stab3}, \eqref{E4_stab2}  &  allowed \\   \hline
\textbf{$E_5$}         & always
&\eqref{E5_stab1}, \eqref{E5_stab2} &  allowed \\   \hline
\textbf{$E_*$}       & numerical & numerical & allowed \\ 
       & simulations & simulations & \\ \hline
\end{tabular}
\end{center}
\end{table}


\subsection{Sensitivity analysis}\label{sens}
In this section we investigate the behavior of the system responses when the parameters
change their values within an appropriate range.
We compute the various populations equilibrium values as function
of a pair of parameters at a time, thus obtaining surfaces in all the possible pairs of
parameter spaces.
In order to do this, for each pair of parameters we
combine the respective ranges to build a equispaced grid of values and then we plot the four
surfaces resulting from the values assumed by the populations at each point of the grid.
Since our model contains $12$ parameters, there are $66$ possible cases.
Of these, we present the most interesting results starting from
the best situation for the hive, i.e. the cases in which the
mite-and-disease-free equilibrium $E_1$ is stably attained.
They arise for the following parameter pairs: $h-\gamma$, $b-\mu$ and $m-\mu$.

In Figure \ref{surf_E1} (left)
the system shows different transcritical bifurcations, as the parameter $\gamma$ increases. 
Specifically, for smaller values of the horizontal trasmission rate $\gamma$ the system settles
to $E_1$, the healthy beehive scenario, then the infected bee population appears in the system,
i.e.\ we find the epidemics among the bees, $E_4$, and finally $E_5$,
where the healthy bees population disappears and the mites invade the hive.
An increase in $h$ instead positively affects the mite populations, as expected.

From Figure \ref{surf_E1} (right), increasing the bees disease-related mortality rate $\mu$
drives the system into a safer scenario, all bees are healthy and the colony becomes mite-free.
A higher bees birth rate seems to favor the infected mites and bees populations, when the former
appear in the ecosystem and also the healthy bees, but this is true only for low values of $b$.
Instead the susceptible mites decrease for an increasing $b$.
Here transcritical bifurcations relate
all the equilibrium points.
The results confirm what already remarked in \cite{H}.  
Namely, the equilibrium $E_1$ is reached for very small values of the
transmission rate $\gamma$ combined with a high enough bees disease-related mortality $\mu$. 
The behavior of the sensitivity surface in the $m-\mu$ case is very similar.

In the parameter spaces
$b-\lambda$ and $b-\gamma$, not shown, a low $\gamma$ or $\lambda$ favors
the healthy bees and both healthy and infected mites,
while $b$ fosters all the system's populations.
For $b-\beta$ and $b-\delta$ the behavior is similar in terms of $b$, but decreasing
the other parameter depresses instead the infected mites.


\begin{figure}[h!]
\centering 
\includegraphics[scale=.3]{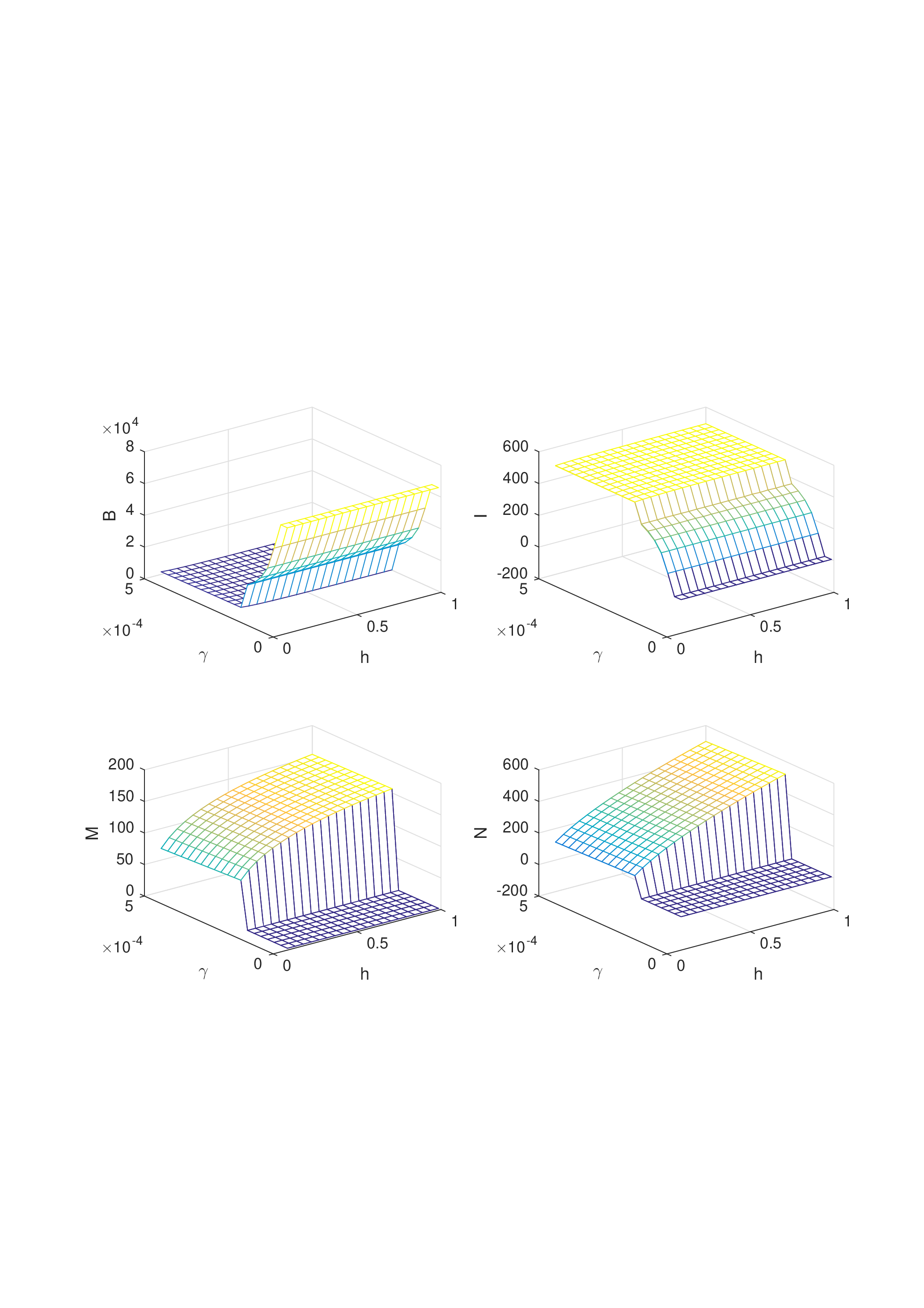}
\hfill
\includegraphics[scale=.3]{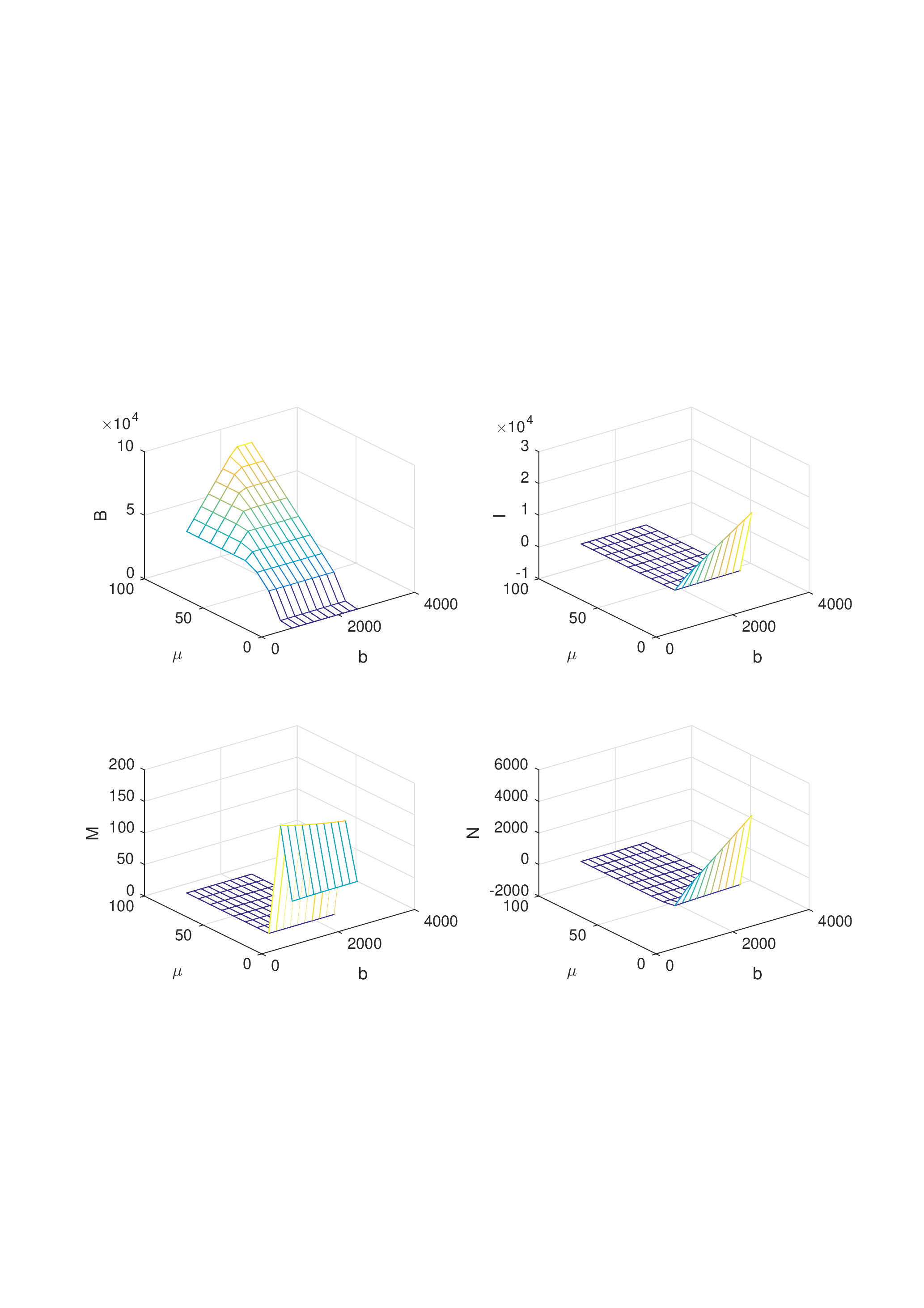}
\caption{Sensitivity surfaces, in terms of the pair of parameters $h-\gamma$, (left), and
$b-\mu$ (right). The other parameter values are choosen as $r=0.02$, $b=1500$,
$m=0.023$, $n=0.007$, $e=0.000001$, $\mu=3$, $\gamma=0.001$, $\lambda=0.004$, $\delta=0.00008$, $\beta=0.00005$, $h=0.25$, $p=0.013$.
Initial conditions $B=15000$, $I=0$, $M=6$, $N=4$.}
\label{surf_E1}
\end{figure}

In Figure \ref{r-b,r-e} (left) as the \textit{Varroa} growth rate $r$ increases,
the healthy bee population decreases while the mite populations reasonably increase. The dynamics
of the infected bee population, instead, is almost insensitive to the parameter $r$, but it
grows linearly with the daily bee birth rate $b$. 
Furthermore, this kind of behavior is observed whenever the parameter $b$ is considered,
regardless of the other parameter being taken into consideration. In fact, a larger $b$
means a greater number of bees in the colony and thus proportionally also more infected.
As in the previous Figure, infected mites benefit from a higher $b$, while the
healthy ones are instead depressed.

Another transcritical bifurcation is shown in Figure \ref{r-b,r-e} (right) connecting
the mite-free equilibrium $E_4$ and coexistence.
The bigger the \textit{Varroa} growth rate, the greater
the grooming rate needed to wipe out the mites from the beehive.
If this resistance mechanism is not high enough, the mites invade the colony. 
In the $r-n$ parameter space, all the populations behave like in this $r-e$ case.

\begin{figure}[h!]
\centering 
\includegraphics[scale=.3]{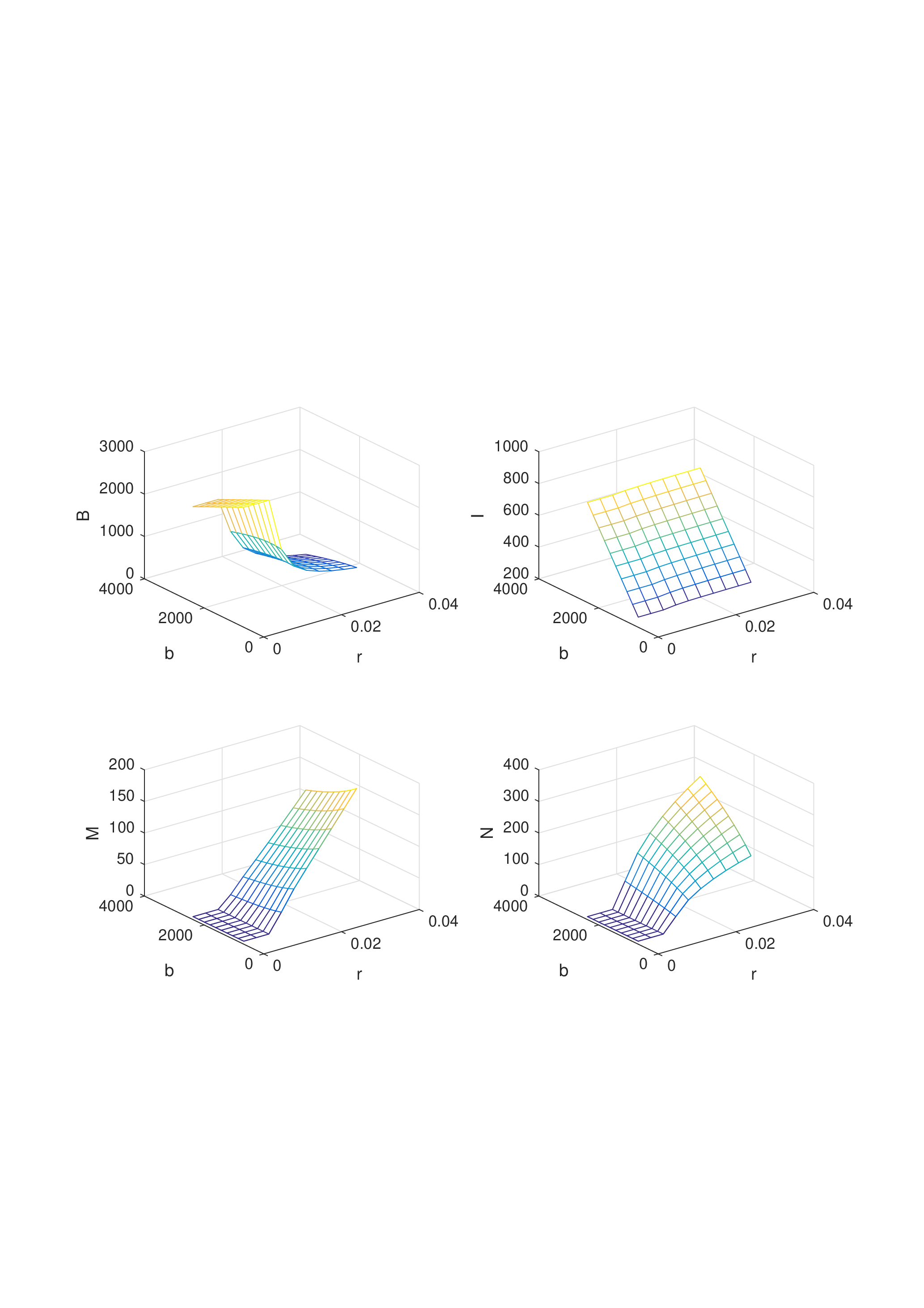}
\hfill
\includegraphics[scale=.3]{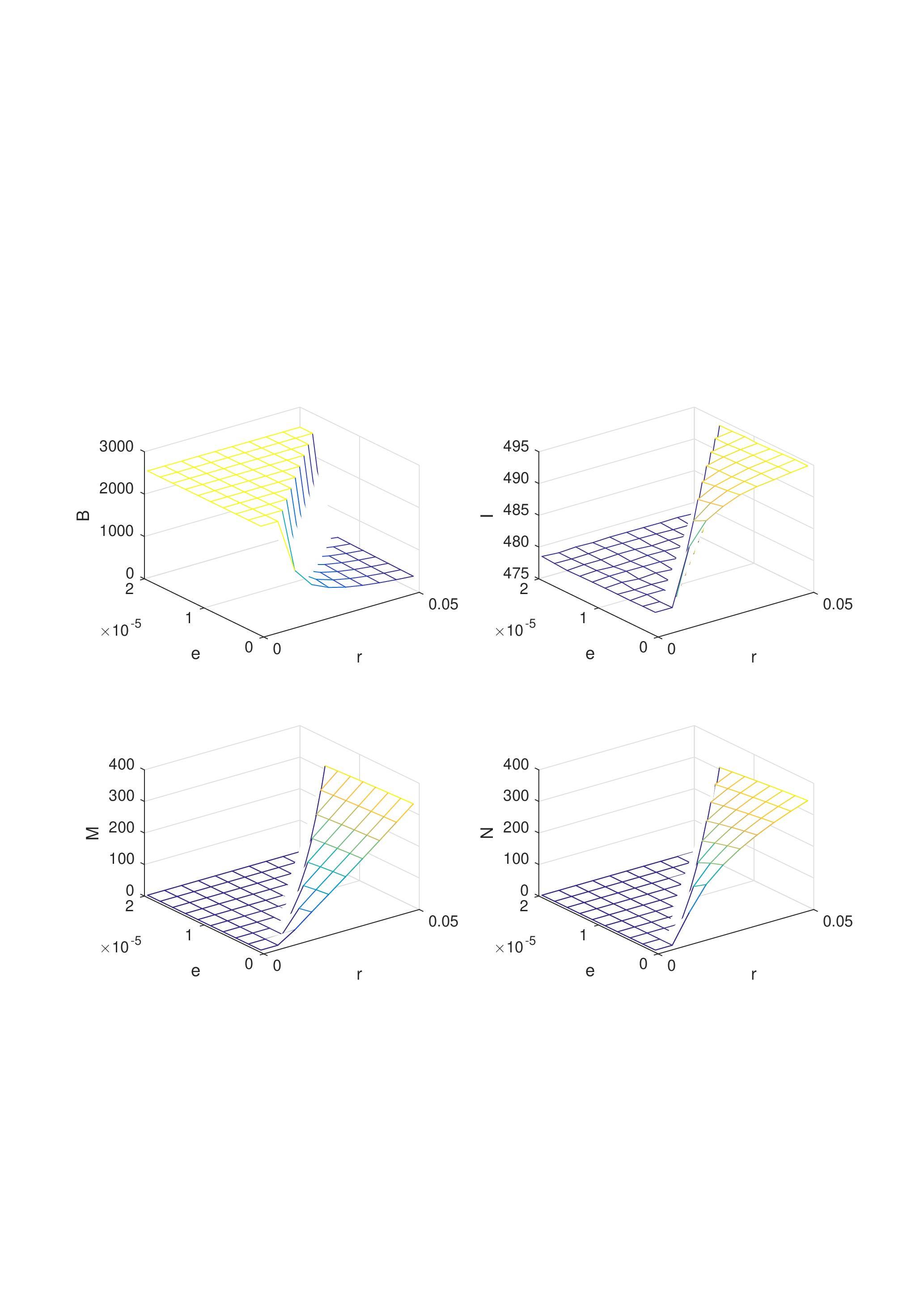}
\caption{Sensitivity surfaces, in terms of the pair of parameters $r-b$, (left), and
$r-e$ (right). The other parameter values are choosen as $r=0.02$, $b=1500$,
$m=0.023$, $n=0.007$, $e=0.000001$, $\mu=3$, $\gamma=0.001$, $\lambda=0.004$, $\delta=0.00008$, $\beta=0.00005$, $h=0.25$, $p=0.013$.
Initial conditions $B=15000$, $I=0$, $M=6$, $N=4$.}
\label{r-b,r-e}
\end{figure}

\begin{figure}[h!]
\centering 
\includegraphics[scale=.3]{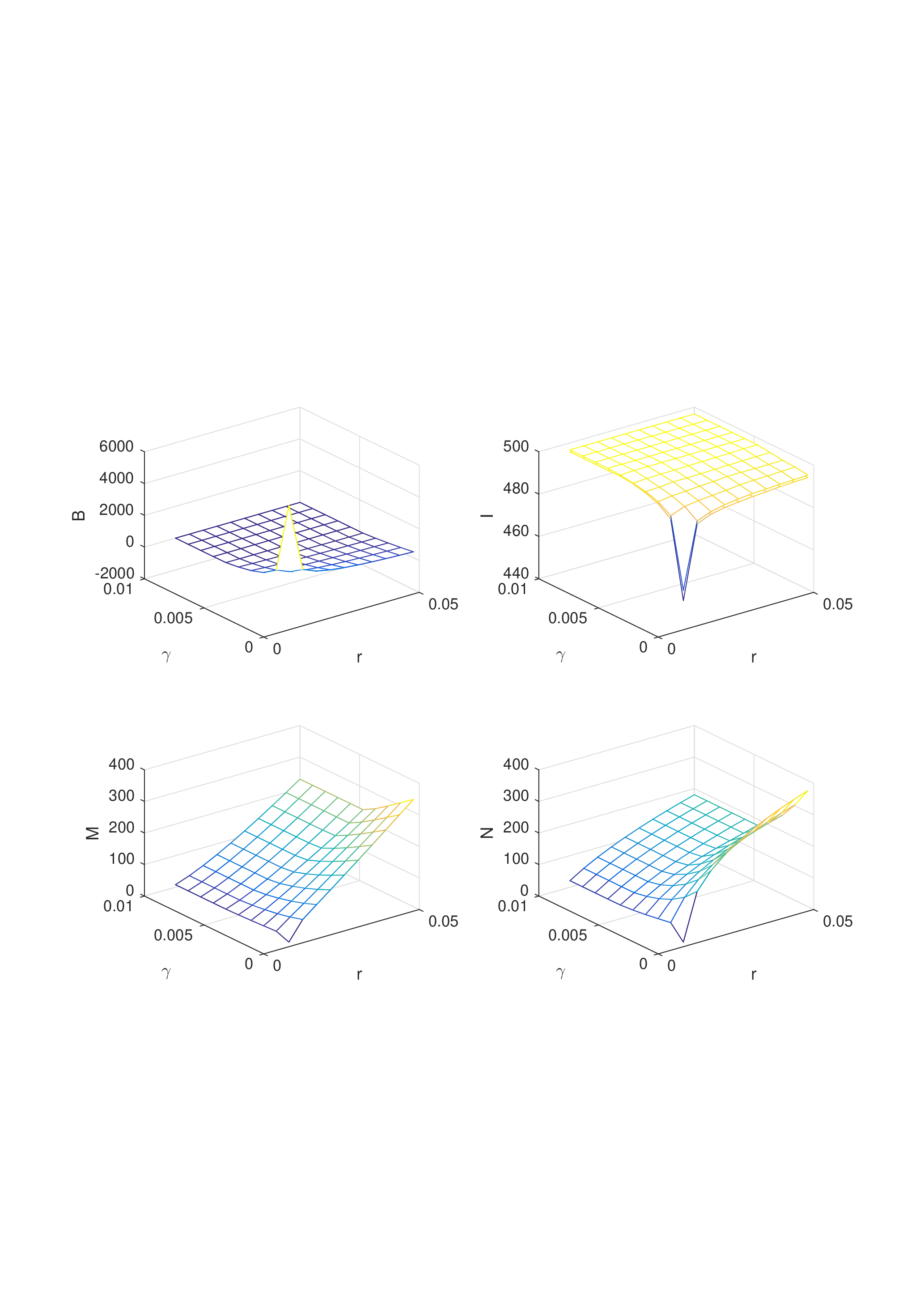}
\hfill
\includegraphics[scale=.3]{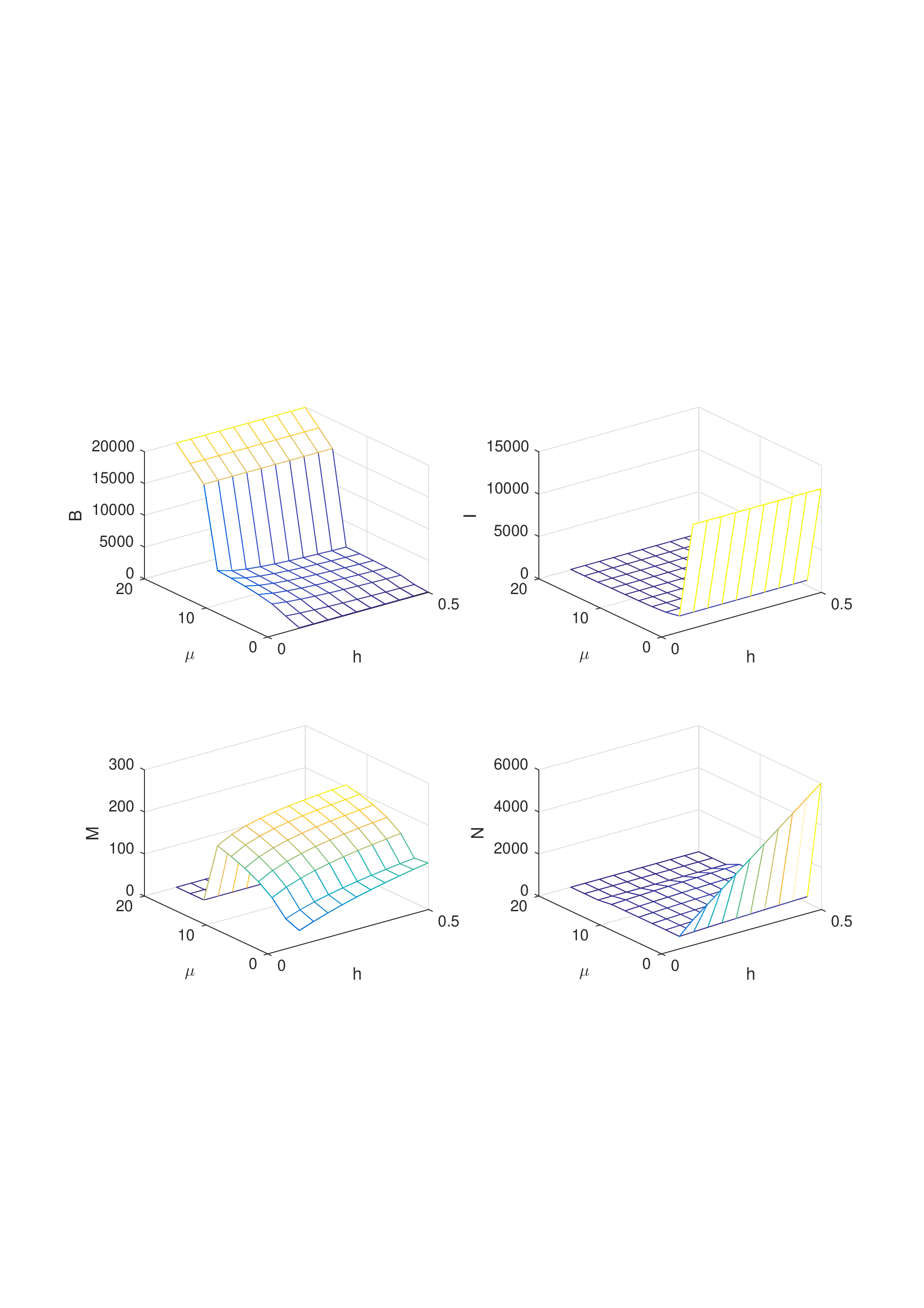}
\caption{Sensitivity surfaces, in terms of the pair of parameters $r-\gamma$, (left), and
$h-\mu$ (right). The other parameter values are choosen as $r=0.02$, $b=1500$,
$m=0.023$, $n=0.007$, $e=0.000001$, $\mu=3$, $\gamma=0.001$, $\lambda=0.004$, $\delta=0.00008$, $\beta=0.00005$, $h=0.25$, $p=0.013$.
Initial conditions $B=15000$, $I=0$, $M=6$, $N=4$.}
\label{r-gamma,h-mu}
\end{figure}

In Figure \ref{r-gamma,h-mu} (left), the combined effect of the mites growth and of the
horizontal transmission of the virus among bees are reported.
Only for really small values of both $r$ and
$\gamma$ the healthy bee population can survive.
An increase of both $\gamma$ and $r$ has a negative
influence on the healthy bee population and a positive impact on the infected bee population.
In this region of the parameter domain, the
system approaches the mite-free attractor $E_4$.
For slightly larger values of anyone of these parameters, the mites establish themselves
in the system.
As a result, we find coexistence followed, for even larger values of both such parameters,
by the healthy-bees-free equilibrium $E_5$.
This last transition occurs in particular when the bifurcation parameter $\gamma$
crosses the critical value $\gamma^{\dagger} \approx 0.005$.
Note that as $r$ increases, the mites populations resonably increase too, while they are
less sensitive to changes in $\gamma$.
The bees surfaces look also alike in the cases $\lambda-\gamma$ and $\delta-\beta$, not shown,
but the mites have a peak at the origin and are instead depressed by an increase in either
one of the parameters.

In the parameter spaces $r-\beta$, $r-\delta$ and $r-\lambda$ the bee populations behave
similarly, not shown.
A similar picture is also found in the $h-\delta$ parameter space.
A low value of $r$ independently of the other parameter leads to
the healthy-bee-only point. Then we find coexistence and ultimately
tends toward the susceptible-bee-free equilibrium.
Larger values of the parameters $\beta$ and $\delta$ increase both mites
populations but above all the infected;
instead for the latter the opposite occurs for $\lambda$.

Figure \ref{r-gamma,h-mu} (right) better shows the influence of the disease-related mortality
$\mu$ on the system dynamics.
Starting from low values of the bifurcation parameter $\mu$, we first find the
healthy-bees-free equilibrium $E_5$,
then coexistence followed by the healthy-bees-only equilibrium $E_1$.
Evidently, a higher value of the bees disease-related mortality has a positive impact on the
healthy bees and a negative influence on the infected bee population.
The highest values of the parameter ($\mu>15$) depress both mite populations.
In particular, we note that as $\mu$ increases, the populations $I$ and $N$ dramatically decrease.
Thus, a higher viral titer at the colony level is obtained for lower values of $\mu$. We thus find
again the result discussed in our previous investigation, \cite{H}:
when transmitted by Varroa mites,
the least harmful diseases for the single bees are
the most virulent ones for the whole colony.
The influence of a larger $h$ is positively felt just by the mite populations.

\begin{figure}[h!]
\centering 
\includegraphics[scale=.3]{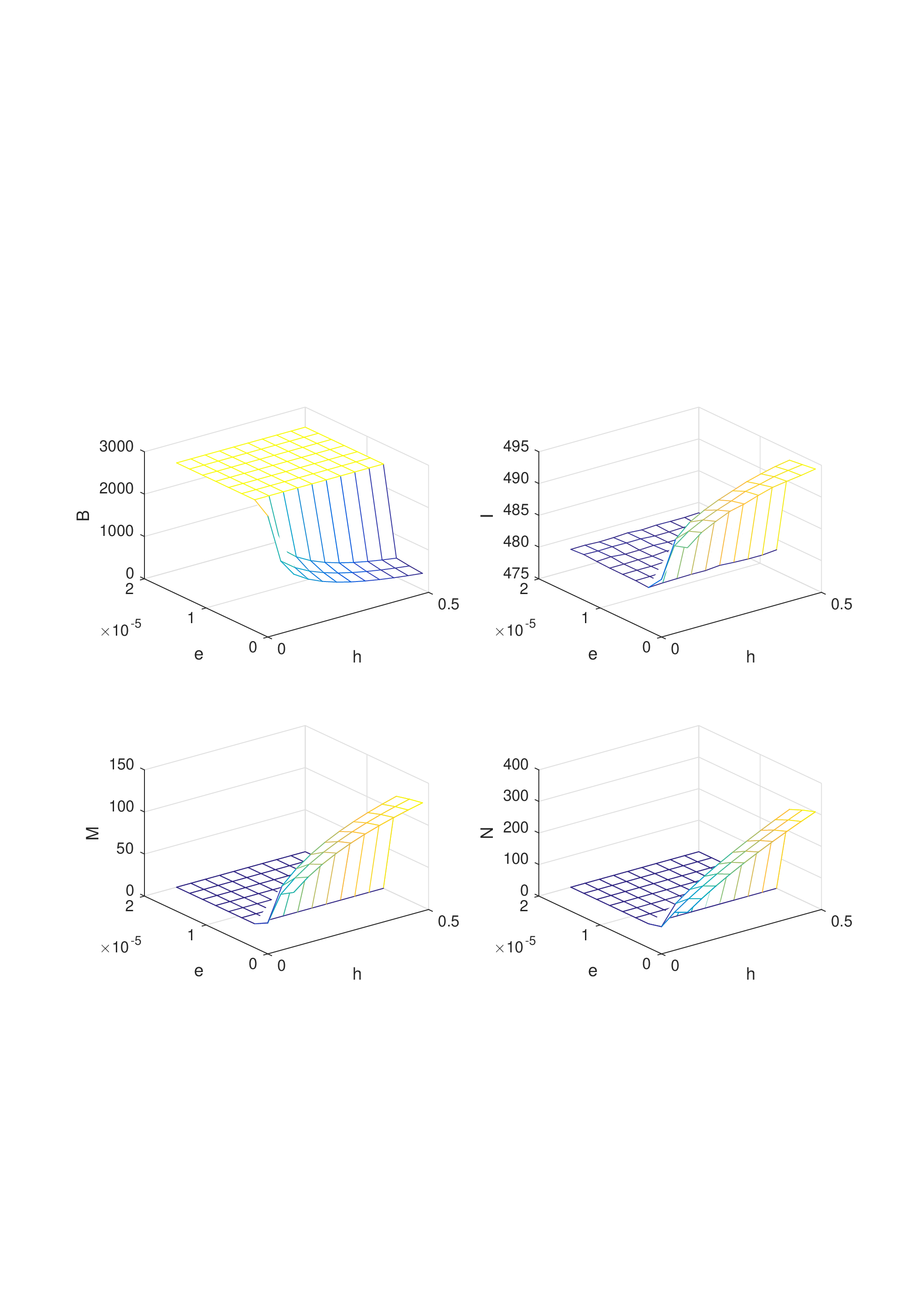}
\hfill
\includegraphics[scale=.3]{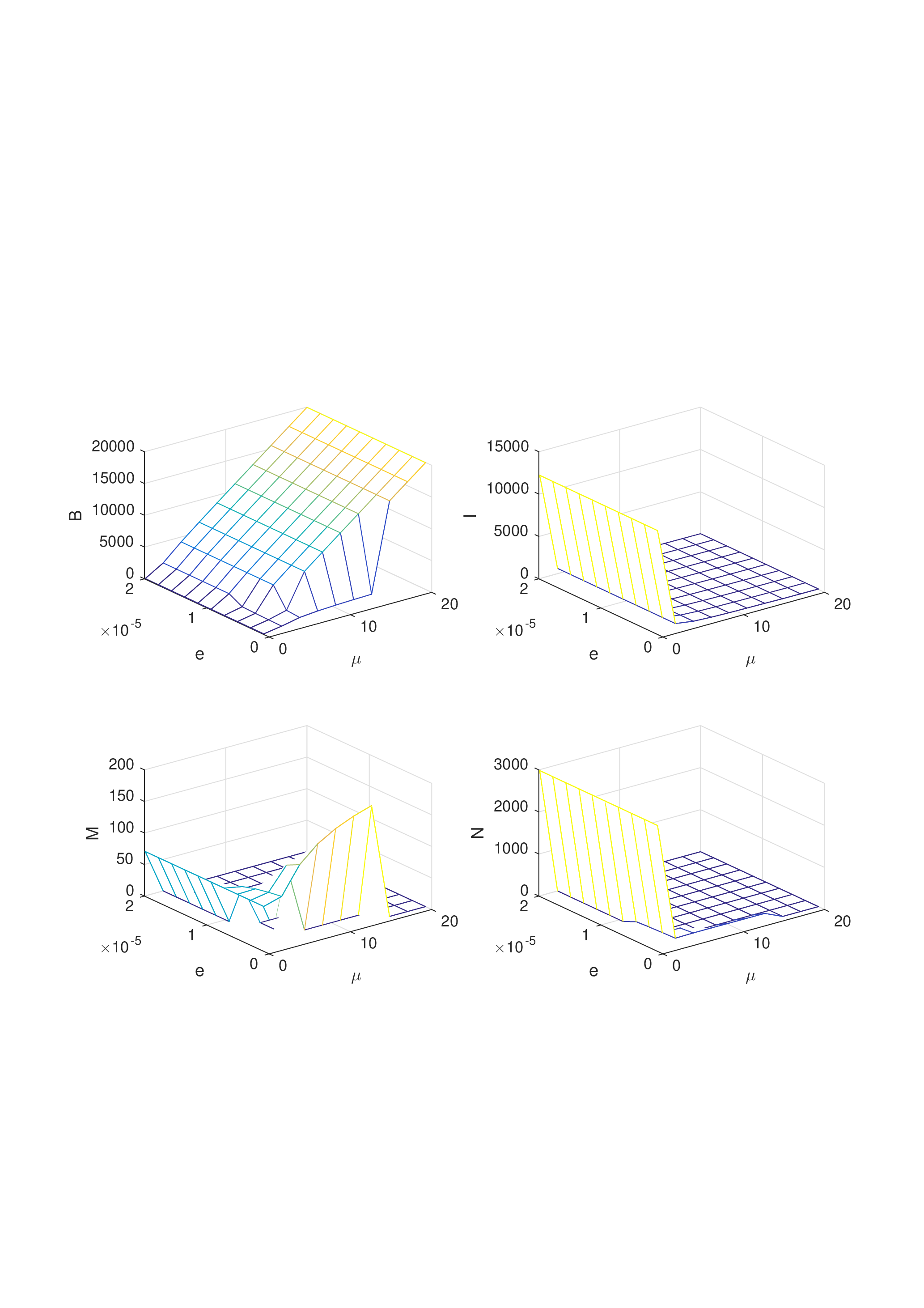}
\caption{Sensitivity surfaces, in terms of the pair of parameters $h-e$, (left), and
$e-\mu$ (right). The other parameter values are choosen as $r=0.02$, $b=1500$,
$m=0.023$, $n=0.007$, $e=0.000001$, $\mu=3$, $\gamma=0.001$, $\lambda=0.004$, $\delta=0.00008$,
$\beta=0.00005$, $h=0.25$, $p=0.013$.
Initial conditions $B=15000$, $I=0$, $M=6$, $N=4$.}
\label{h-e,mu-e}
\end{figure}

In Figure \ref{h-e,mu-e} (left) we explore the effectiveness of the grooming behavior $e$ as a
resistance mechanism against to the \textit{Varroa} infestation. The behavior of sensitivity surfaces
is encouraging: regardless of the average number of mites per bee, if the grooming behavior is
performed strongly enough mites are eventually wiped out of the system.
Indeed, a transcritical bifurcation between $E_4$ and the coexistence equilibrium occurs
when the grooming rate crosses the bifurcation threshold $e^{\dagger} \approx 0.000007$.
The higher the average number of mites per bee $h$, the larger both infected populations
become, as well as the susceptible mites, while healthy bees are depressed.
A similar behavior occurs for the case $h-n$. Also for
$m-n$ and $m-e$ we find the equilibrium surfaces to look as for the $h-e$ case,
with the only difference that in the last two cases the infected bees
benefit from a low $m$ and at least in the range explored, they do not vanish.

The bees in cases $\mu-n$ and $\mu-e$  behave similarly, see Figure \ref{e_gamma} right.
Mites and infected bees tend to disappear for larger values of $\mu$.

For $\mu-\lambda$, $\mu-\gamma$, $\mu-\delta$ and $\mu-\beta$ the infection in both
populations is hindered by larger values of the mortality, while healthy bees benefit
from it. The remaining parameter plays a role only on susceptible mites, with low
values fostering their growth.

\begin{figure}[h!]
\centering 
\includegraphics[scale=.3]{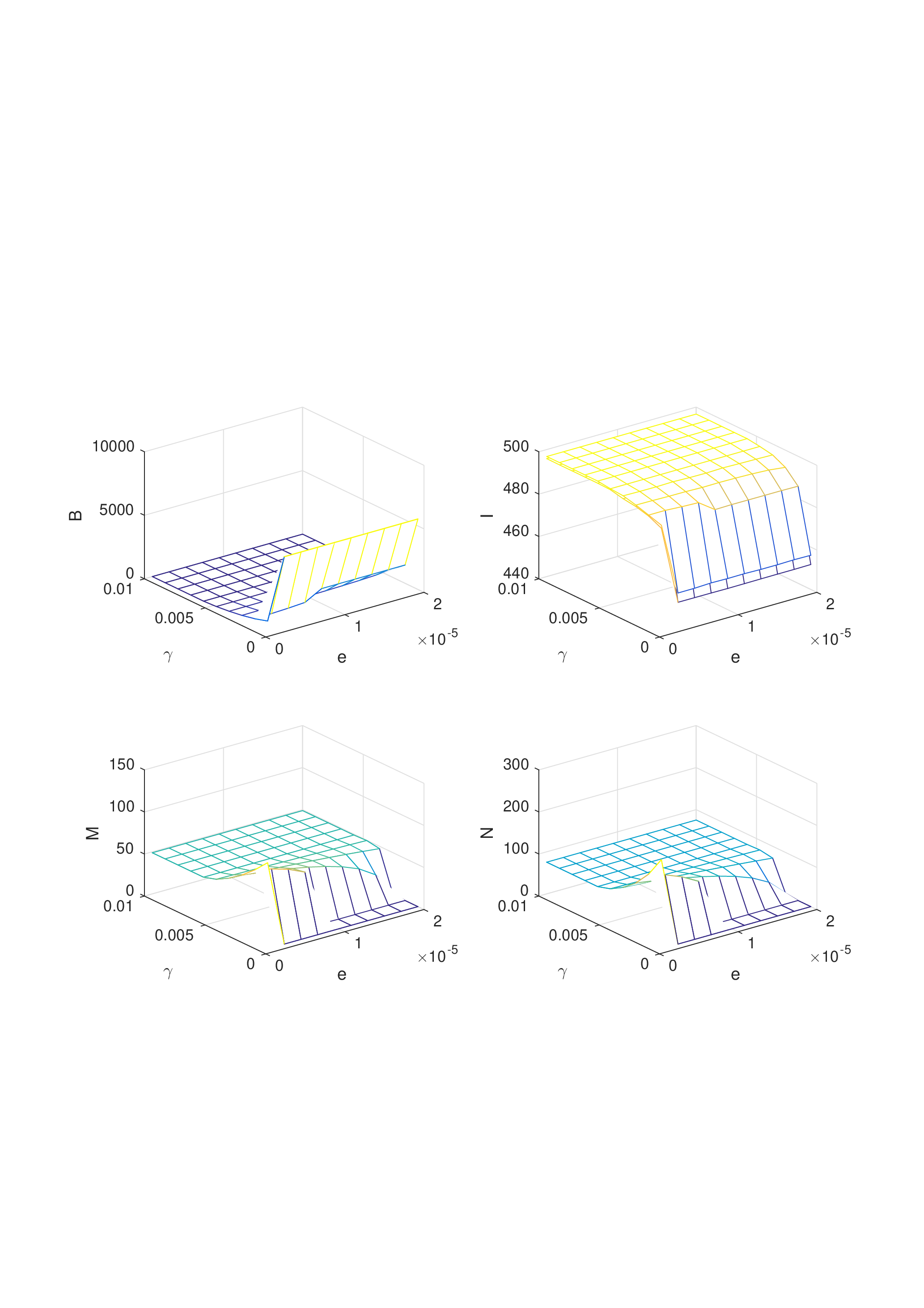}
\includegraphics[scale=.3]{bif_surf_mu_e.eps}
\caption{Sensitivity surface, in terms of the pair of parameters $e-\gamma$, (left)
and $\mu-e$ (right). 
The other parameter values are choosen as $r=0.02$, $b=1500$,
$m=0.023$, $n=0.007$, $e=0.000001$, $\mu=3$, $\gamma=0.001$, $\lambda=0.004$, $\delta=0.00008$, $\beta=0.00005$, $h=0.25$, $p=0.013$.
Initial conditions $B=15000$, $I=0$, $M=6$, $N=4$.}
\label{e_gamma}
\end{figure}

The surfaces in the parameter spaces $n-\lambda$, $n-\beta$, $e-\lambda$, $e-\delta$, $e-\beta$
and $n-\delta$ again look similar, see Figure \ref{n_delta} left. The mite-free equilibrium
is attained for relatively large values of $n$ (or $e$), low values of the remaining parameter
help only the healthy mites.

Cases $\lambda-\delta$, $\lambda-\beta$ are again similar, with larger $\lambda$ favoring
the infection in the bees and low values of the other parameter helping the healthy mites.

Again in the $\beta-\gamma$ and $\delta-\gamma$ cases the most influencial parameter
is $\gamma$, large values leading to the healthy-bee-free point. The other parameter
helps the susceptible populations if it is small.

Cases $p-h$, $p-r$ and $p-m$ are similar,
a large $p$ depresses all the populations but the healthy bees,
low values of the other parameter helping them too.
Also in the parameter spaces $p-\delta$ and $p-\beta$ the surfaces have similar shapes.
For $p-\lambda$ and $p-\gamma$ we find a similar behavior in terms of $p$, but
here the infected bees are much less affected by its growth.
For $p-b$ similar considerations
hold, but again we find that a decrease in $b$ reduces the infected bees, while $p$
on them has scant effect.

\begin{figure}[h!]
\centering 
\includegraphics[scale=.3]{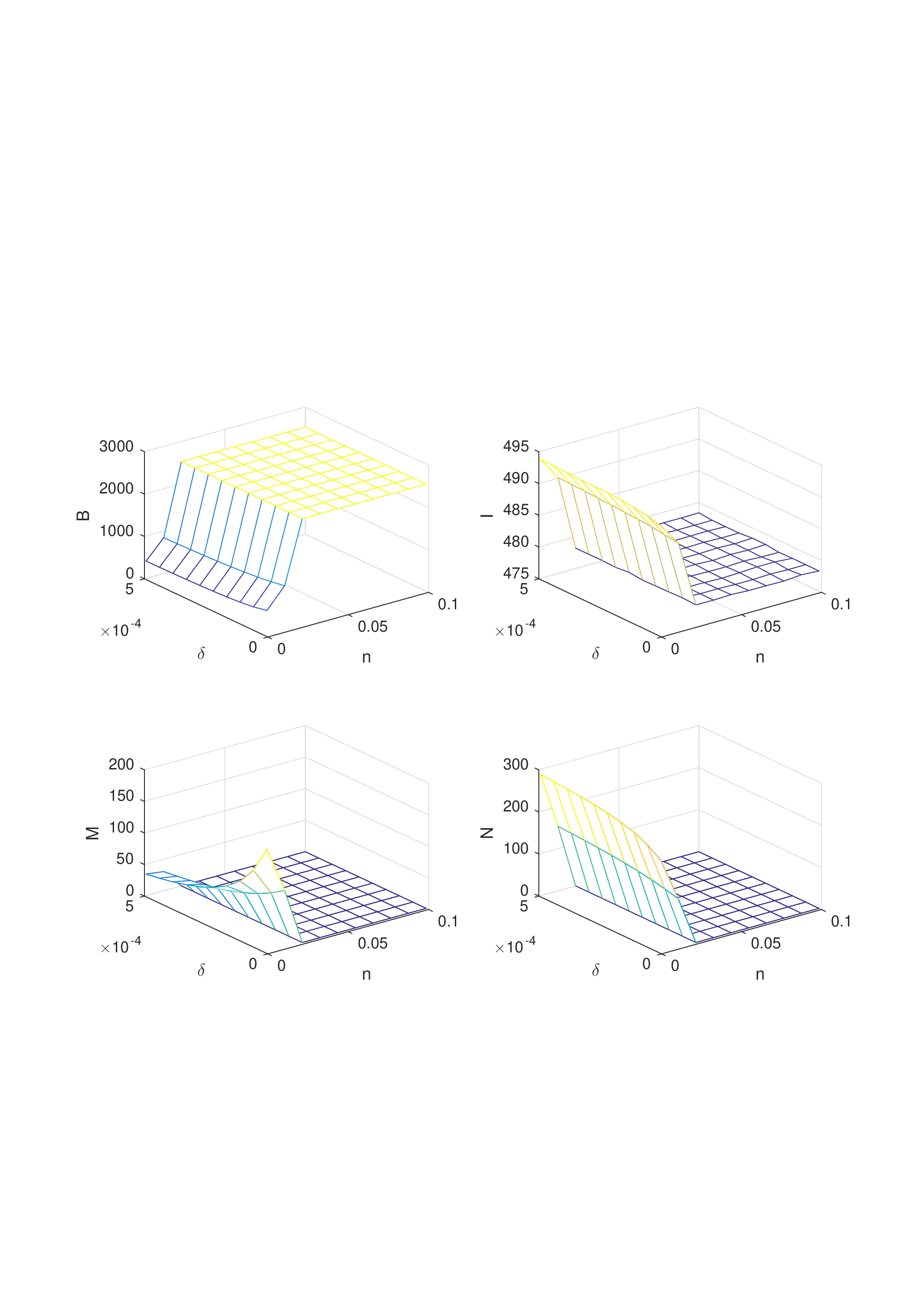}
\includegraphics[scale=.3]{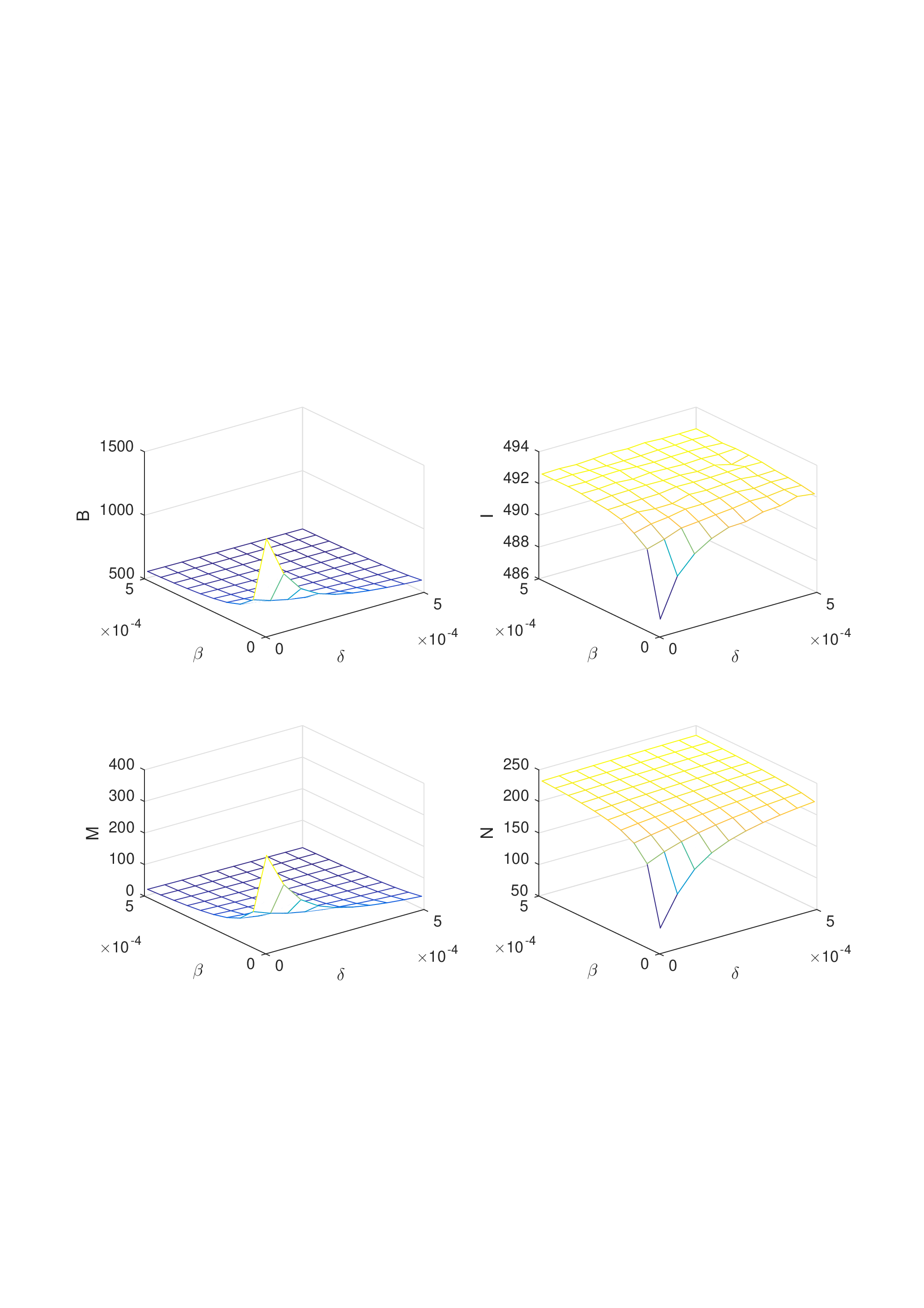}
\caption{Sensitivity surface, in terms of the pair of parameters $n-\delta$, (left)
and $\delta-\beta$ (right). 
The other parameter values are choosen as $r=0.02$, $b=1500$,
$m=0.023$, $n=0.007$, $e=0.000001$, $\mu=3$, $\gamma=0.001$, $\lambda=0.004$, $\delta=0.00008$, $\beta=0.00005$, $h=0.25$, $p=0.013$.
Initial conditions $B=15000$, $I=0$, $M=6$, $N=4$.}
\label{n_delta}
\end{figure}

\begin{figure}[h!]
\centering 
\includegraphics[scale=.3]{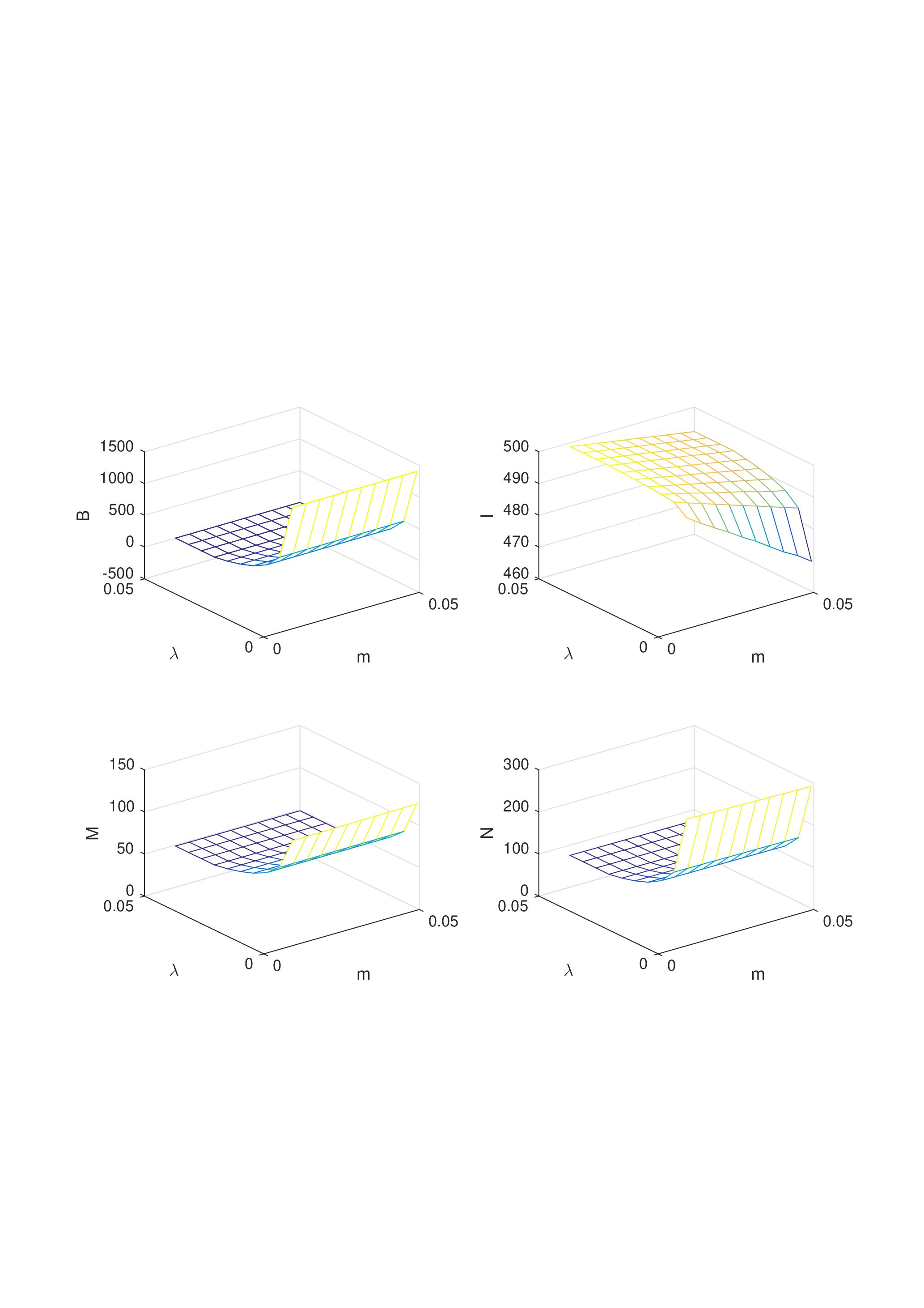}
\includegraphics[scale=.3]{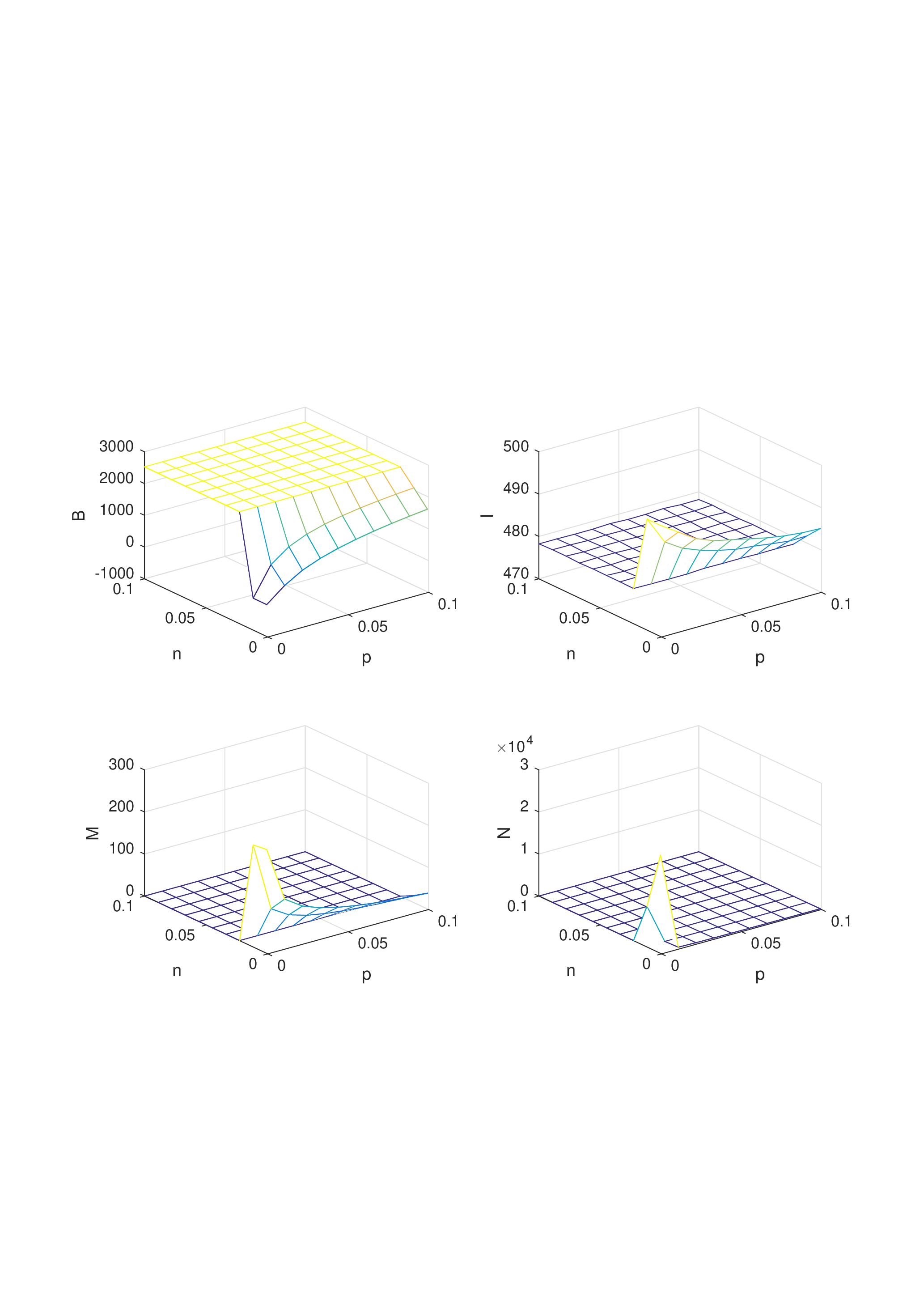}
\caption{Sensitivity surface, in terms of the pair of parameters $m-\lambda$, (left)
and $p-n$ (right). 
The other parameter values are choosen as $r=0.02$, $b=1500$,
$m=0.023$, $n=0.007$, $e=0.000001$, $\mu=3$, $\gamma=0.001$, $\lambda=0.004$, $\delta=0.00008$, $\beta=0.00005$, $h=0.25$, $p=0.013$.
Initial conditions $B=15000$, $I=0$, $M=6$, $N=4$.}
\label{m_lambda}
\end{figure}

In Figure \ref{n_delta} right,
we can observe the influence of the transmission rates $\delta$ and $\beta$.
As their values increase, both the healthy bees and mites populations decrease
while the infected ones increase.
As a result, the level of infection in the hive dramatically grows and almost
all the \textit{Varroa} mites become virus carriers.

In Figure \ref{m_lambda} left,
another transcritical bifurcation occurs. An increase of the parameter
$\lambda$ has its greatest influence on the healthy bee population and
makes it completely vanish from the system. Here the transition between
the coexistence and the no-healthy-bees equilibrium point is observed. Conversely,
the transmission rate between infected \textit{Varroa} and healthy bees
has a positive effect on the infected bees population.
Furthermore, also the mites populations decrease but evidently this is due
to the reduction of the host population.

Finally, from Figure \ref{m_lambda} right, we note the combined effect of
two parameters that hinder the infestation spreading. If the \textit{Varroa}
natural mortality is higher than the threshold value $n^{\dagger} \approx 0.023$
the mites disappear from the hive and we find equilibrium $E_4$. Instead, the
influence of \textit{Varroa} intraspecific competition is shown for smaller values
of $n$: as $p$ increases, both the mites populations and the infected bees 
decrease while the healthy bees increase. This behavior occurs also in the 
$p-e$ parameter space.

Although a sufficient rate of the grooming behavior can really help the colony
to control the \textit{Varroa} infestation,
Figure \ref{h-e,mu-e} (right) suggests that the disease-related mortality $\mu$ has a more
significant influence on the shape of the surfaces and thus
on the system dynamics than the grooming rate $e$. Indeed the shape of the
sensitivity surfaces is mostly determined by the changing parameter $\mu$.
Note here the following transcritical bifurcations: starting for low values of $\mu$ we find $E_5$,
the healthy-bee-free point,
then, as $\mu$ increases, the coexistence equilibrium and finally the mite-free equilibrium $E_4$.
This chain of transitions reemphasizes once again the effect of the parameter $\mu$.

Furthermore, from Figure \ref{e_gamma} left, the effectiveness of the grooming
behavior can be relevant only in the presence of a really low value of the parameter $\gamma$,
i.e.\ in this region where the transcritical bifurcation mentioned above between
coexistence and $E_4$ is observed, with $e$ as bifurcation parameter. Conversely, for higher values
of the horizontal transmission rate among bees, the colony is driven from $E_4$,
the mite-free situation, through
coexistence, to $E_5$ with the extinction of healthy bees.

The knowledge of these most relevant parameters could suggest the ones which
would maybe be affected by human external measures 
in order to drive the system to settle in a possibly safer position.
To sum up, the parameters most affecting the system are $r$, $\mu$, $\gamma$
and $e$.
Our findings elucidate their influence on the system.
These are however all ecosystem-related parameters, that perhaps can hardly be influenced
by man-undertaken measures, although they might depend on other external factors such as
for instance climatic changes.
Theoretically, the sensitivity
surfaces show that the population of healthy bees would highly benefit from a reduction of the
\textit{Varroa} growth rate, as well as a reduction of the horizontal transmission rate among bees.
Also a higher bees disease-related mortality and a larger grooming rate would
help in protecting the colonies.


\section{Conclusion}

We have introduced a model for bees and mites, modifying our previous approach \cite{H},
allowing a link between the carrying capacity of the mites and the bees population,
while in the former investigation these were not bound together.
Our results agree with empirical evidence, in addition to describing the fundamental
role played by the mite in this process.
The ecosystem described here can settle only to the following possible outcomes:
the disease-and-mite-free environment, the ideal situation;
the mite-free situation, in which however part of the bees are endemically infected;
the coexistence equilibrium in which both bees and mites are present, all affected
by the disease;
the \textit{Varroa} invasion leading to extinction of the healthy bees, while the remaining
ones are all infected.

Among our findings, we observe that the endemic disease cannot affect all the bees
in a \textit{Varroa}-free colony. Indeed, the infected-bees-only situation,
described by equilibrium $E_2$ in our analysis, turns out to be unstable,
if the mite reproduction rate exceeds their mortality rate, i.e.\ the opposite of
condition (\ref{assump_1}) holds.
Conversely, the \textit{Varroa} invasion scenario can become possible
where the bees are all infected and mites invade the hive, the point $E_5$ above.
In agreement with the fact that the presence of \textit{V. destructor}
increments the viral transmission,
this result indicates that the whole bee population may become infected when 
the disease vector is present in the beehive. 

Also, the two healthy populations cannot survive together, in the absence of infection.
This result agrees with field observations, in which 
the bee colony is considered infected when \textit{V. destructor} is present in the beehive.
Thus, the discovery of \textit{Varroa} in the hive necessarily implies
that at least part of the bees are virus-affected.

The findings of this study also indicate that a low horizontal virus transmission rate
among honey bees in beehives will help in protecting the bee colonies from the
\textit{Varroa} infestation and the viral epidemics. In fact, from the first
condition in \eqref{E1_stab},
a low $\gamma$ is necessary for 
the disease- and mite-free equilibrium $E_1$ to be stable.

The sensitivity analysis allows us to identify the parameters most affecting the system:
$r$, $\mu$, $\gamma$ and $e$. Specifically, 
for small changes of these parameters the system experiences transcritical bifurcations
between all the equilibria.
These results emphasize the importance of keeping the \textit{Varroa} growth under control.
In order to do this, the sensitivity surfaces suggest that a decrease of $\gamma$ combined
with an increase of $e$ will benefit the colonies.
Furthermore, the analysis substantiates the empirical remark that the most harmful diseases
at the colony level are those that least affect the single
bees. In fact, we can observe that for a very low bee mortality $\mu$, the equilibrium
toward which the system always settles is the healthy-bee-free point, in which only infected
bees and mites thrive.
The counterintuitive effect of the parameter $\mu$ on the system's dynamics agrees with the
experimental findings of \cite{Nature}. By artificially infecting the larval colonies of
{\it{Apis mellifera}} and {\it{Apis cerana}}, the researchers have compared the {\it{Varroa}}
evolution in both colonies. They found that in the european bees
it does not change much in comparison
with the colonies that are not infected, while the oriental bees were hindered, with a high
larval mortality. They conclude that the higher vulnerability of the latter to the mite could
imply a higher resistance at the colony level, because the infected larvae are more easily
spotted and killed by the worker bees. This also agrees with the general consensus that the
{\it{Varroa}} represents the most important factor affecting the survival of {\it{Apis mellifera}}
colonies, while the oriental bees thrive relatively easily in its presence.

In a similar way, a higher bees mortality induced by the diseases vectored by {\it{Varroa}}
maintains the total viral load in the colonies at a relatively moderate level. This thus
enhances their survival chances, because the individual infected bees have less time available
for horizontally transmitting the virus thereby infecting other individuals.
Both \cite{Nature} and this investigation show that vulnerable individuals may help the 
superorganism, against the common assumption that it is just the presence of ``strong''
individuals to ensure the colony survival.

Comparing the results of this investigation with the outcomes of the model \cite{H},
it appears that the present model, more mathematically elaborated in that it contains
a term of Leslie-Gower type, namely the mite bee-dependent carrying capacity, is not
really fundametally necessary, as its qualitative analysis is captured
well by the former system \cite{H}.

The bees are the most important pollinators worldwide. In recent years managed honey beehives
have been subject to decline mainly due to infections caused by the invading parasite
\textit{Varroa destructor}, leading to widspread colony collapse, \cite{GA}.
In turn, this has caused alarm
among the scientists for apiculture and even more for agriculture.
This model represents a further step beyond the basic model presented in \cite{H}
for the understanding of this problem.
But there is the need of the model validation using real data. At present, we are
currently working on
this aspect of the research, gathering the field data.
From the collected field information we plan to
obtain an estimate of the model parameters of most interest for the beekeepers: first of all the
\textit{Varroa} growth rate, a parameter most needed
to assess the amount of acaricides to be used and the timings for their most
effective application.
In order to measure real field data we have started field experiments, in collaboration with
beekeepers of the Cooperative ``Dalla Stessa Parte'', Turin, and from ``Aspromiele'', the
Association of Piedmont Honey Producers, Turin. 
These experiments represent probably the first collaboration between beekeepers
and mathematicians on this
topic in our country and they are really precious for the research,
since real field data are not yet presently available.

\end{document}